\tikzstyle{vertex}=[circle,fill=white,draw=black,line width=2.5pt,minimum size=36pt,inner sep=.5ex,font=\huge]
\tikzstyle{selected vertex} = [vertex, fill=none,draw=red,line width=4pt,font=\huge]
\tikzstyle{edge} = [draw,line width=2pt,-]
\tikzstyle{weight} = [font=\large]
\tikzstyle{selected edge} = [draw,line width=4pt,-,blue]
\tikzstyle{ignored edge} = [draw,line width=2pt,-,blue]
\newcommand{\eat}[1]{}
\newcounter{lxmalgC}
\newcommand{\lxmNP}{\textsf{NP}}
\newcommand{\lxmPTime}{\textsf{P}}
\newcolumntype{L}[1]{>{\raggedright\arraybackslash}p{#1}}
\newcolumntype{C}[1]{>{\centering\arraybackslash}p{#1}}
\newcolumntype{R}[1]{>{\raggedleft\arraybackslash}p{#1}}
\newcommand{\lxmProblemTwoPartition}{{2BGP}}
\newcommand{\lxmProblemPrevious}{{$k$BGP}}
\newcommand{\lxmProblemPreviousDouble}{{$k{\tau}$BGP}}
\newcommand{\lxmProblemWordload}{W$k$BGP}
\newcommand{\lxmProblemMotif}{M$k$BGP}
\newcommand{\lxmProblemMotifTwo}{M$2$BGP}
\newcommand{\lxmProblemMotifTriangle}{$\mathrm{\Delta}-k$BGP}
\newcommand{\lxmProblemMotifTriangleTwoPar}{$\mathrm{\Delta}-$2BGP}
\newcommand{\lxmAlgorithmWorkload}{\textsc{WkBGPartition}}
\newcommand{\lxmAlgorithmMotif}{\textsc{Tri-kBGPartition}}
\title{Balanced Partitioning for Optimizing Big Graph Computation: Complexities and Approximation Algorithms}
{
	\author{Baoling Ning\inst{1} 
		\and Jianzhong Li\inst{2,3}}
	\authorrunning{B. Ning et al.}
	\institute{Heilongjiang University, China. \email{ningbaoling2009@163.com} \and Harbin Institute of Technology, China. \and
		Shenzhen Institute of Advanced Technology, Chinese Academy of Sciences, China.
		}
}
\date{}
\begin{document}

\maketitle

\begin{abstract}
	
Graph partitioning is a key fundamental problem in the area of big graph computation.
Previous works do not consider the practical requirements when
optimizing the big data analysis in real applications.
In this paper, motivated by optimizing the big data computing applications,
two typical problems of graph partitioning  are studied.
The first problem is to optimize the performance of specific workloads
by graph partitioning, which lacks of algorithms with performance guarantees.
The second problem is to optimize the computation of motifs
by graph partitioning, which has not been focused by previous works.
First, the formal definitions of the above two problems are introduced,
and the semidefinite programming representations are also designed
based on the analysis of the properties of the two problems.
For the motif based partitioning problem,  it is proved to be $\lxmNP$-complete even for the special case of $k=2$ and the motif is a triangle,
and  its inapproximability is also shown by proving that there are no efficient algorithms with finite approximation ratio.
Finally, using the semidifinite programming and sophisticated rounding techniques, 
the bi-criteria  $O(\sqrt{\log n\log k})$-approximation algorithms with polynomial time
cost are designed and analyzed for them.

\keywords{Graph Partitioning \and Bi-criteria Approximation Algorithm \and Semidefinite Programming}
\end{abstract}

\section{Introduction}
\label{sec:introduction}

Graph partitioning is a widely used strategy when processing
big graph data. 
On one side, graph partitioning provides fundamental functions
for supporting many important graph analysis operations (\emph{e.g.}, \emph{traversal},
\emph{pagerank}, \emph{clustering}).
On the other side, graph partitioning is a key fundamental
problem in big graph computing platforms, which is the core of many mechanisms such as load
balancing.
Therefore, the problem of balanced graph partitioning is important in
the area of big graph analysis.


In previous works, the most common problem focused by the them
is called $k$-balanced graph partitioning, {\lxmProblemPrevious} for short.
Given a graph $G$ and an integer $k\geq 2$, the {\lxmProblemPrevious} problem
is to find a solution for partitioning the graph nodes, such that all nodes are partitioned 
into $k$ disjoint partitions of the same size, and the partitioning cost is minimal.
Here, the partitioning cost should be chosen according to the practical requirements,
the most popular cost is the size (or weighted size) of the \emph{cut} edges between
different partitions.
The problem has been proved to be $\lxmNP$-complete, as one of the most
classical $\lxmNP$-complete problems found in the 1970s, more precisely,
even considering the special case of $k=2$, the {\lxmProblemPrevious} problem is still
$\lxmNP$-complete \cite{Garey76SomeSimplified}.
Then, a natural and important research issue is try to design an efficient approximation
solution, however, in 2006, it is proved by \cite{Andreev06BalancedGraph} that
for arbitrary $k\geq 3$ there are no polynomial-time algorithms for {\lxmProblemPrevious}
with finite approximation ratio unless \textsf{P}=\lxmNP.
Thus, it is almost impossible to design efficient approximation algorithms for \lxmProblemPrevious,
and the follow-up research works have no way but to relax the performance requirements
further. One popular and principal way is to design \emph{bi-criteria approximation algorithms},
where the \lxmProblemPrevious problem is relaxed to the $(k,\tau)$-balanced graph partition problem
({\lxmProblemPreviousDouble} for short).
Given a graph $G$, an integer $k\geq 2$ and a real number $\tau\geq 1$,
the {\lxmProblemPreviousDouble} problem is to find a solution for partitioning
the nodes into $k$ partitions of at most $\tau\lceil\frac{|V_G|}{k}\rceil$ size each,
such that the partitioning cost is minimal.
The meaning of \emph{bi-criteria} is that the output is relaxed to be a solution of the {\lxmProblemPreviousDouble} problem but
the cost is compared with the optimal solution of the original {\lxmProblemPrevious} problem (equivalently, $\tau=1$).
Current works on \emph{bi-criteria} approximation algorithms can be divided into three kinds.
(1) The first kind is designed based on the algorithms for the {\lxmProblemTwoPartition} problem, and
the best approximation ratio is $O(\log{k}\sqrt{\log{n}})$ \cite{Arora04ExpanderFlows,Simon97HowGood}.
(2) The second kind is designed using the mathematical programming techniques,
and the best ratio is $O(\sqrt{\log{n}\log{k}})$ \cite{Krauthgamer09PartitioningGraphs,Even99FastApproximate}.
(3) The third kind is designed based on combinatorial techniques such as dynamic programming,
and the best approximation ratio can be $O(\log{n})$ \cite{Feldmann15BalancedPartitions,Andreev06BalancedGraph,Racke08OptimalHierarchical}.

As shown above, previous works have focused on the graph partitioning problem and
produced a plenty of insightful results.
However, in view of the applications of big graph processing systems such as \cite{Martella17SpinnerScalable}
and \cite{Gonzalez12PowerGraph}, many novel requirements for the partitioning problem have emerged,
and current methods can not satisfy the new requirements and does not
support the optimization for novel processing of graph data.

Therefore, in this paper, two typical problems of graph partitioning motivated by
the big data computing applications are studied.
The first problem is to optimize the performance of specific workloads
by adjusting the partitioning of graph data.
Current works focusing on this problem aim to design heuristic \cite{Pacaci19ExperimentalAnalysis}
or learning based partitioning methods \cite{Fan20ApplicationDriven}, and one of their drawbacks
is the lack of performance guarantees.
Our goal is to provide efficient partitioning methods with performance guarantee
to optimize the computation of the given workloads.
The second problem is to optimize the computation of motifs
by adjusting the partitioning of graph data.
As far as we know, there still lacks of research works focusing on this problem.
Our goal is to provide the fundamental theoretical analysis results and
try to design efficient partitioning methods for it.
The main contributions of the paper can be summarized as follows.
\begin{itemize}
\item[\ding{172}] The formal definition of the 
balanced graph partition problem driven by optimizing workloads is introduced.
By transforming the problem to a representation of semidefinite program,
a bi-criteria  $O(\sqrt{\log n\log k})$-approximation algorithm with polynomial time
cost is proposed. 
\item[\ding{173}] The formal definition of the 
balanced graph partition problem based on optimizing motif computation is introduced.
The problem is proved to be $\lxmNP$-complete even for the special case of $k=2$ and the motif is a triangle.
Then, its inapproximability is also shown by proving that there are no efficient algorithms with finite approximation ratio.
\item[\ding{174}] For the special case that the motif is a triangle, using the semidefinite programming techniques,
a bi-criteria  $O(\sqrt{\log n\log k})$-approximation algorithm with polynomial time
cost is designed.
For the general case, the proposed algorithm is proved to be extended with the same performance guarantee. 
\end{itemize}

\section{Preliminaries}
In this part, we will first introduce some basic definitions and useful notations,
then, by comparing with the classical balanced graph partitioning problem,
the problems of balanced graph partitioning driven by optimizing workload and motif
are introduced.
Some important symbols utilized in the paper are summarized in Table~\ref{table:graph:notation}.

\begin{table}[t]
\centering
\caption{The List of Notations}\label{table:graph:notation}
	\vspace{0.5em}
	\begin{tabular}{|c|c|}
		\hline
		notation & comment \\
		\hline
		$G$ & the input graph \\
		$V_G$, $E_G$ & the node and edge set of $G$\\
		$k$ & the number of partitions\\
		$\textsc{PAR}_G$ & a partition solution of $G$\\
		$\textsc{PAR}^k_G$ & a $k$-partition solution of $G$\\
		$V_i$ & some partition \\
		$\tau$ & balance factor \\
		$\Phi$ & the given workload \\
		$M$ & the given motif \\
		$\textsf{cost}_{\Phi}(\cdot)$ & the cost function for optimizing workloads evaluation \\
		$\textsf{cost}_{M}(\cdot)$ & the cost function for optimizing motif computation \\
		\hline
	\end{tabular}
\end{table}

\subsection{Graph and Graph Partition}
In this paper, the input graph is denoted by $G=(V,E)$, where $V$ is the node set and $E\subseteq{V\times V}$ is the edge set.
When there are several graphs to be distinguished, we also use $V_G$ and $E_G$ to emphasize that the graph $G$ is referred.
\begin{definition}[the partition and $k$-partition solution of a graph]
Given a graph $G$, a partition solution is indeed defined on the node set $V$, that is $\textsc{PAR}_G=\{V_1,\cdots,V_m\}$.
Here, the union of all $V_i$s equals to $V$, and all $V_i$s are disjoint from each other.
If the size of the partition solution is $k$, that is $m=k$, it is also called to be a $k$-partition solution, denoted by $\textsc{PAR}^k_G=\{V_1,\cdots,V_k\}$.
\qed
\end{definition}

Moreover, for a node $u\in{V}$, we use $\textsc{PAR}_G(u)$ to represent the  partition where the node $u$ is placed.
Obviously, if $\textsc{PAR}_G(u)=V_i$, we always have $u\in{V_i}$.

With a clean context, we also use $\textsc{PAR}_G$ to represent $\textsc{PAR}^k_G$.
Specially, when $k$ is some constant, for example $k=2$, 2-partition is utilized for simplicity.

\subsection{The Classical Balanced Graph Partitioning Problem}
The classical balanced graph partitioning problem is indeed designed for load balancing,
whose formal definition is as follows.

\begin{definition}[$k$-balanced partition solution of a graph]\label{def:kBalancedPartitionA}
A $k$-balanced partitioning solution of $G$ is a $k$-partition solution $\textsc{PAR}_G=\{V_i\}$ such that for any $i\in[1,k]$ we have $|V_i|\leq{\lceil\frac{|V|}{k}\rceil}$.
\qed
\end{definition}

Since there are many possible partitioning solution, to select the best one,
the optimal partitioning solution depends on the partitioning cost function.
The most commonly used is defined based on the cut set, which can be expressed in Equation~\eqref{eq:costPartition}.
\begin{equation}
	\textsf{cost}(\textsc{PAR}_G)=|\{(u,v)\in{E}\;|\;\textsc{PAR}_G(u)\neq\textsc{PAR}_G(v)\}| \label{eq:costPartition}
\end{equation}

Then, to minimize the size of cutting edges, we have the following definition.
\begin{definition}[$k$-balanced graph partitioning, {\lxmProblemPrevious} for short]\label{def:kBalancedPartitionProblem}
Given a graph $G$ and an positive integer $k$, the $k$-balanced graph partitioning problem is to find 
a $k$-balanced partition solution $\textsc{PAR}_{G}$ for $G$ such that the cost $\textsf{cost}(\textsc{PAR}_G)$ is minimized.
\qed
\end{definition}

The {\lxmProblemPrevious} problem is one of the most classical $\lxmNP$-complete problems\cite{Garey79ComputersIntractablity}, which is also called the graph bisection problem
for the special case $k=2$.
\begin{theorem}[\cite{Garey79ComputersIntractablity}]\label{theorem:graph:kBGPNPC}
The {\lxmProblemPrevious} problem is $\lxmNP$-complete, even for the special case $k=2$.
\qed
\end{theorem}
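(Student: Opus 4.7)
The plan is the classical reduction from MAX CUT that passes through a balanced-MAX-CUT intermediate and a graph complementation; the outcome is that {\lxmProblemTwoPartition} on an even-vertex graph coincides exactly with MIN BISECTION, and the latter is hard. First I would dispatch membership in \lxmNP: a $k$-partition of $V$ is a certificate of size $O(|V|\log k)$, and checking $|V_i|\leq\lceil |V|/k\rceil$ together with evaluating $\textsf{cost}(\textsc{PAR}_G)$ via Equation~\eqref{eq:costPartition} both run in polynomial time.

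For \lxmNP-hardness at $k=2$, I would reduce from MAX CUT. Given a MAX CUT instance $(H,\ell)$ with $|V(H)|=n$, I construct $H'$ by adding $n$ isolated vertices to $H$, and then take $G$ to be $\overline{H'}$, the complement of $H'$, which has $2n$ vertices. The target cut value becomes $\ell' = n^2 - \ell$. Two observations drive the reduction: (i) any subset $A\subseteq V(H)$ extends to a balanced bisection of $V(H')$ by pairing $A$ with $n-|A|$ of the isolated vertices and $V(H)\setminus A$ with the remaining $|A|$, and since isolated vertices contribute no edges, the MAX CUT value of $H$ equals the maximum cut over balanced bisections of $H'$; (ii) for any balanced partition $(S,\bar S)$ of $V(H')$ with $|S|=|\bar S|=n$, the cut sizes in $H'$ and $\overline{H'}$ sum to $|S|\cdot|\bar S|=n^2$. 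Combining (i) and (ii) yields that $(H,\ell)$ is a yes-instance of MAX CUT iff $G$ admits a 2-balanced partition of cost at most $n^2-\ell$.

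The main obstacle I anticipate is being careful with the padding and with the precise notion of ``balanced''. One must check that for every $0\leq |A|\leq n$ the padding with isolated vertices can be completed to a balanced bisection (reducing to the trivial inequalities $0\leq n-|A|\leq n$), and that conversely any balanced bisection of $H'$ restricts to a cut of $H$ of the same value because isolated vertices contribute nothing no matter how they are distributed. One also needs to be explicit that Definition~\ref{def:kBalancedPartitionA} at $k=2$ with an even $|V|$ forces $|V_1|=|V_2|=|V|/2$, so that {\lxmProblemTwoPartition} really coincides with MIN BISECTION rather than a looser notion. The arithmetic in observation (ii) is a one-line counting identity. With these details in place the reduction is clearly polynomial-time, yielding \lxmNP-completeness at $k=2$; the $k\geq 3$ case follows either by further isolated-vertex padding of the hard instance or by direct appeal to the classical argument of Garey, Johnson and Stockmeyer~\cite{Garey76SomeSimplified}.
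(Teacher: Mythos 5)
Your proof is correct: membership in \lxmNP{} is routine, and the padding-plus-complementation reduction from (simple) MAX CUT is exactly the classical Garey--Johnson--Stockmeyer argument that the paper invokes purely by citation---Theorem~\ref{theorem:graph:kBGPNPC} is stated without proof, so your write-up simply reconstructs the proof behind the cited reference, including the needed observation that Definition~\ref{def:kBalancedPartitionA} with $k=2$ and even $|V|$ forces an exact bisection. One minor caution: your closing aside that fixed $k\geq 3$ hardness ``follows by further isolated-vertex padding'' is not immediate (a $k$-way split of a padded instance can avoid cuts that every bisection must make, so the padded optimum need not encode the bisection optimum), but that case is not required for the statement, which asserts hardness already at $k=2$.
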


Even worse, we have the following result.
\begin{theorem}[\cite{Andreev06BalancedGraph}]\label{theorem:graph:kBGPAppro}
For any $k\geq 3$, unless $\lxmPTime=\lxmNP$, there is no polynomial time algorithm for the {\lxmProblemPrevious} problem with finite approximation ratio. 
\qed
\end{theorem}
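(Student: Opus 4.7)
The plan is a gap-creating reduction: I would show that a polynomial-time algorithm for {\lxmProblemPrevious} achieving any finite approximation ratio $\alpha$ would, as a corollary, decide in polynomial time whether a given graph admits a $k$-balanced partition of cost zero. Because edge-cut costs are non-negative integers, an $\alpha$-approximation must return $0$ exactly when the optimum is $0$, and must return some value $\geq 1$ whenever the optimum is $\geq 1$. Hence it suffices to prove that the zero-cost feasibility version of $k$-balanced partition is $\lxmNP$-hard for every $k\geq 3$.

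For the hardness step I would reduce from 3-PARTITION, which is strongly $\lxmNP$-complete: given $a_1,\ldots,a_{3k}$ with $\sum_i a_i = kB$ and each $a_i\in(B/4,B/2)$, decide whether the multiset splits into $k$ triples each summing to $B$. From such an instance I construct a graph $G$ whose $i$-th connected component is a clique (or any connected gadget) on $a_i$ vertices. Then $|V_G|=kB$, so every $k$-balanced partition places exactly $B$ vertices in each class. Because all edges live inside a single component, a zero-cost $k$-balanced partition exists iff the components can be grouped into $k$ bins each summing to $B$, i.e.\ iff the 3-PARTITION instance is a yes-instance. Composing this with the putative $\alpha$-approximation decides 3-PARTITION in polynomial time, contradicting $\lxmPTime\neq\lxmNP$.

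The main delicate point, and the reason the theorem excludes $k=2$, is that the bin-packing hardness underlying 3-PARTITION only engages when at least three bins are available: for $k=2$ the analogous zero-cost question reduces to a simpler component-balancing problem that is polynomial-time solvable, consistent with the known non-trivial approximability of {\lxmProblemTwoPartition}. A secondary subtlety is that \emph{strong} $\lxmNP$-hardness of 3-PARTITION is essential here, because the reduction introduces $\sum_i a_i$ vertices; writing the $a_i$ in unary keeps the reduction polynomial in the input length. Neither issue presents a serious obstacle once strong $\lxmNP$-hardness of 3-PARTITION is invoked, so the bulk of the work really sits in setting up the gadget and verifying that the balance constraint $|V_i|\leq\lceil|V_G|/k\rceil$ is tight in both directions of the reduction.
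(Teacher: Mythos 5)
Your proposal is correct and follows essentially the same route as the cited result of Andreev and R\"acke, which is also the template this paper reuses verbatim for its own motif-based inapproximability theorem: a finite-ratio algorithm must detect zero-cost instances exactly, and the clique-per-integer reduction from strongly $\lxmNP$-complete 3-Partition (encoded in unary) makes zero-cost detection $\lxmNP$-hard. The only detail left implicit is the routine observation that the bounds $B/4<a_i<B/2$ force every bin of total size $B$ to consist of exactly three cliques, so equal-size zero-cut parts correspond precisely to valid triples.
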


Due to the hardness shown by Theorem~\ref{theorem:graph:kBGPAppro}, more relaxations are considered and
the \emph{bi-criteria} approximation algorithms are designed. 
\begin{definition}[the $(k,\tau)$-balanced partition solution of a graph]\label{def:kBalancedPartitionB}
Given a balance factor $\tau$, a $(k,\tau)$-balanced partition solution of a graph $G$ is a $k$-partiton solution $\textsc{PAR}_G=\{V_i\}$,
such that for any $i\in[1,k]$ we have $|V_i|\leq{\tau\lceil\frac{|V|}{k}\rceil}$.
\qed
\end{definition}

Given a {\lxmProblemPrevious} instance $(G,k)$, the aim of a bi-criteria $(k,\tau)$ approximation algorithm 
$\mathcal{P}$ is to find a $(k,\tau)$-balanced partitioning solution $\textsc{PAR}$ for $G$ such that
the approximation ratio of $\mathcal{P}$ is defined as $\rho_{\mathcal{P}}=\frac{\textsf{cost}(\textsc{PAR})}{\textsf{cost}(\textsc{PAR}^*)}$
where $\textsc{PAR}^*$ is the optimal solution for the {\lxmProblemPrevious} problem.
It should be noted that, because of the definition of bi-criteria approximation algorithms, the
performance ratio $\rho_{\mathcal{P}}$ may be less than 1.

\subsection{Workload Driven Balanced Graph Partitioning Problem}
In big data computing platforms, there are many different kinds of computing tasks,
which may require different data and have different costs of computation, communication and so on.
Therefore, a key problem is to determine the partitioning solution to optimize specific workloads.
In this part, considering several kinds of typical computing tasks such as pattern matching,
traversal, shortest path and so on, the problem of balanced graph partitioning driven by workloads is introduced. 

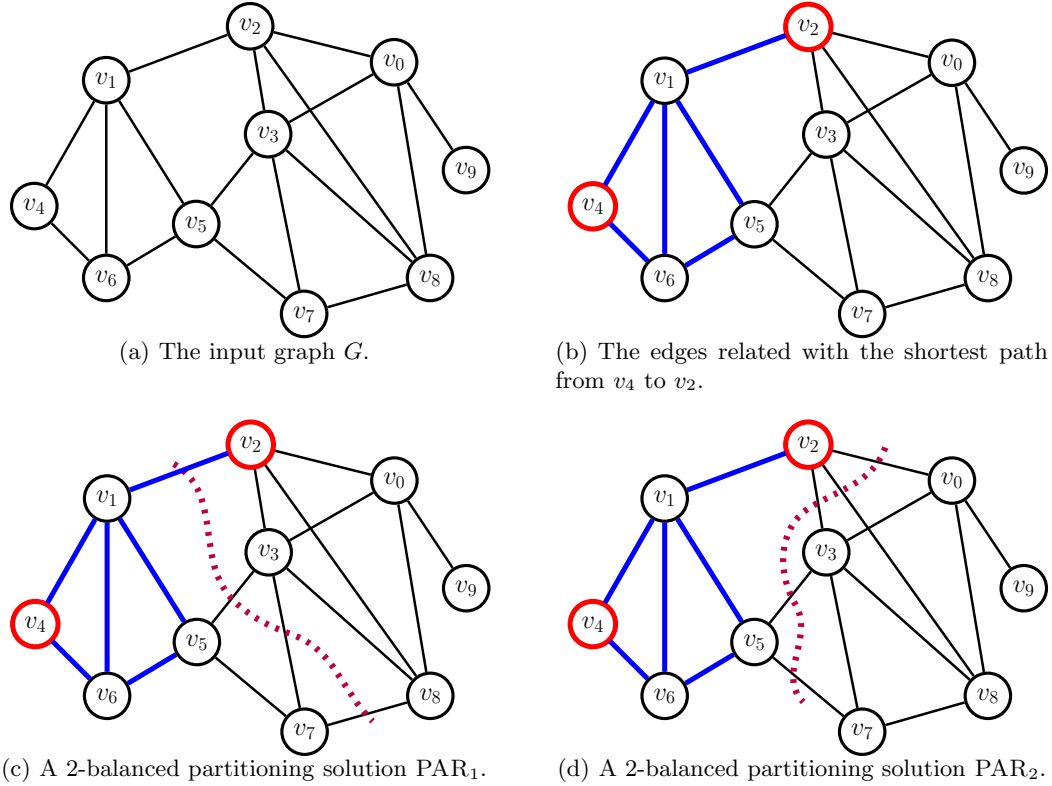
\begin{figure}[ht]
	\centering
	\subfigure[The input graph $G$.]{
		\label{fig:graph:workloadExample:A}
		\begin{minipage}{0.39\textwidth}\centering
			\resizebox{\textwidth}{!}{ %
				\begin{tikzpicture}[ auto,swap]
					\foreach \pos/\name/\vertexID in {{(0,2.5)/a/1}, {(4,4)/b/2}, {(4.5,1)/c/3},
						{(-2,-1)/d/4}, {(2.5,-1.5)/e/5}, {(0,-3)/f/6}, {(5.5,-4)/g/7}, {(9,-3)/h/8}, {(10,0)/i/9}, {(8,3)/j/0}}
					\node[vertex] (\name) at \pos {$v_{\vertexID}$};
					\foreach \source/ \dest /\weight in {b/a/7, c/j/8,d/a/5,i/j/9,
						b/c/7, e/c/5,a/f/15,
						f/d/6,f/e/8,
						g/e/9,g/c/11,a/e/12,g/h/13,h/j/14,b/j/15,b/h/16,c/h/17}
					\path[edge] (\source) -- node[weight] {} (\dest);
			\end{tikzpicture}}
	\end{minipage}}
~~~~~~
	\subfigure[The edges related with the shortest path from $v_4$ to $v_2$.]{
		\label{fig:graph:workloadExample:B}
		\begin{minipage}{0.39\textwidth}\centering
			\resizebox{\textwidth}{!}{ %
				\begin{tikzpicture}[ auto,swap]
					\foreach \pos/\name/\vertexID in {{(0,2.5)/a/1}, {(4,4)/b/2}, {(4.5,1)/c/3},
						{(-2,-1)/d/4}, {(2.5,-1.5)/e/5}, {(0,-3)/f/6}, {(5.5,-4)/g/7}, {(9,-3)/h/8}, {(10,0)/i/9}, {(8,3)/j/0}}
					\node[vertex] (\name) at \pos {$v_{\vertexID}$};
					\foreach \source/ \dest /\weight in {b/a/7, c/j/8,d/a/5,i/j/9,
						b/c/7, e/c/5,a/f/15,
						f/d/6,f/e/8,
						g/e/9,g/c/11,a/e/12,g/h/13,h/j/14,b/j/15,b/h/16,c/h/17}
					\path[edge] (\source) -- node[weight] {} (\dest);
					\path node[selected vertex] at (d) {};
					\path node[selected vertex] at (b) {};
					\foreach \source / \dest in {d/a,d/f,a/f,a/b,a/e,f/e}
					\path[selected edge] (\source) -- (\dest);
			\end{tikzpicture}}
	\end{minipage}}\\
	\subfigure[A 2-balanced partitioning solution $\textsc{PAR}_1$.]{
		\label{fig:graph:workloadExample:C}
		\begin{minipage}{0.39\textwidth}\centering
			\resizebox{\textwidth}{!}{ %
				\begin{tikzpicture}[ auto,swap]
					\foreach \pos/\name/\vertexID in {{(0,2.5)/a/1}, {(4,4)/b/2}, {(4.5,1)/c/3},
						{(-2,-1)/d/4}, {(2.5,-1.5)/e/5}, {(0,-3)/f/6}, {(5.5,-4)/g/7}, {(9,-3)/h/8}, {(10,0)/i/9}, {(8,3)/j/0}}
					\node[vertex] (\name) at \pos {$v_{\vertexID}$};
					\foreach \source/ \dest /\weight in {b/a/v1, c/j/v2,d/a/v3,i/j/v4,
						b/c/v5, e/c/v6,a/f/v7,
						f/d/v8,f/e/v9,
						g/e/v10,g/c/v11,a/e/v12,g/h/v13,h/j/v14,b/j/v15,b/h/v16,c/h/17}
					\path[edge] (\source) -- node[pos=0.5,minimum height=10pt,minimum width=0pt](\weight) {} (\dest);
					\draw[line width=5pt,loosely dashed,purple] (v1.center) to[out=-30,in=140] (v6.center) to[out=-40,in=160] (v11.center) to[out=-20] (v13.center);
					\path node[selected vertex] at (d) {};
					\path node[selected vertex] at (b) {};
					\foreach \source / \dest in {d/a,d/f,a/f,a/b,a/e,f/e}
					\path[selected edge] (\source) -- (\dest);
			\end{tikzpicture}}
	\end{minipage}}
~~~~~~
	\subfigure[A 2-balanced partitioning solution $\textsc{PAR}_2$.]{
		\label{fig:graph:workloadExample:D}
		\begin{minipage}{0.39\textwidth}\centering
			\resizebox{\textwidth}{!}{ %
				\begin{tikzpicture}[ auto,swap]
					\foreach \pos/\name/\vertexID in {{(0,2.5)/a/1}, {(4,4)/b/2}, {(4.5,1)/c/3},
						{(-2,-1)/d/4}, {(2.5,-1.5)/e/5}, {(0,-3)/f/6}, {(5.5,-4)/g/7}, {(9,-3)/h/8}, {(10,0)/i/9}, {(8,3)/j/0}}
					\node[vertex] (\name) at \pos {$v_{\vertexID}$};
					\foreach \source/ \dest /\weight in {b/a/v1, c/j/v2,d/a/v3,i/j/v4,
						b/c/v5, e/c/v6,a/f/v7,
						f/d/v8,f/e/v9,
						e/g/v10,g/c/v11,a/e/v12,g/h/v13,h/j/v14,j/b/v15,b/h/v16,c/h/17}
					\path[edge] (\source) -- node[pos=0.5,minimum height=10pt,minimum width=0pt](\weight) {} (\dest);
					\draw[line width=5pt,loosely dashed,purple] (v15.north) to[out=-110,in=30] (v5.center) to[out=210,in=120] (v6.center) to[out=-60] (v10.south);
					\path node[selected vertex] at (d) {};
					\path node[selected vertex] at (b) {};
					\foreach \source / \dest in {d/a,d/f,a/f,a/b,a/e,f/e}
					\path[selected edge] (\source) -- (\dest);
			\end{tikzpicture}}
	\end{minipage}}
\caption{An Example of Balanced Graph Partitioning Driven by Workload}
	\label{fig:graph:workloadExample}
\end{figure}

\begin{example}\label{example:graph:workload}
Consider the graph $G$ shown in Fig.~\ref{fig:graph:workloadExample:A}, which is composed of 10 nodes.
Let $k=2$, that is we need to partition the nodes in $G$ into two parts.
Two possible partitioning solution are shown in Fig.~\ref{fig:graph:workloadExample:C} and \ref{fig:graph:workloadExample:D}, denoted by $\textsc{PAR}_1$ and $\textsc{PAR}_2$.
Here, the solution of $\textsc{PAR}_1$ is $\{v_1,v_4,v_5,v_6,v_7\}$ and $\{v_0,v_2,v_3,v_8,v_9\}$, whose cut edges are $\{(v_1,v_2),(v_3,v_5),(v_3,v_7),(v_7,v_8)\}$ and the cost is 4, that is $\textsf{cost}(\textsc{PAR}_1)=4$.
For the solution $\textsc{PAR}_2$, the partitioning result is $\{v_1,v_2,v_4,v_5,v_6\}$ and $\{v_0,v_3,v_7,v_8,v_9\}$, whose cut edges eare $\{(v_0,v_2),(v_2,v_8),(v_2,v_3),(v_3,v_5),(v_5,v_7)\}$ and the cost is 5, that is $\textsf{cost}(\textsc{PAR}_1)=5$.
According to Definition~\ref{def:kBalancedPartitionProblem}, $\textsc{PAR}_1$ is better than $\textsc{PAR}_2$.
However, if we consider some possible specific computing jobs, the result may be that $\textsc{PAR}_1$ is worse than $\textsc{PAR}_2$.
Assume that the shorted path between $v_4$ and $v_2$ needs to be computed on $G$, and it is assumed that the algorithm needs to do a breath first
travel beginning from $v_4$.
Then, it is easy to verify that the visited edges are the ones with blue color in Fig.~\ref{fig:graph:workloadExample:B}.
Comparing $\textsc{PAR}_1$ with $\textsc{PAR}_2$ again, it can be found that $\textsc{PAR}_2$ is better than $\textsc{PAR}_1$.
As shown in Fig.~\ref{fig:graph:workloadExample:C}, $\textsc{PAR}_1$ places the nodes of the edge $(v_1,v_2)$ into different partitions, which may cause
once communication cost, while as shown in Fig.~\ref{fig:graph:workloadExample:D}, $\textsc{PAR}_2$ does not cut any edges which will be visited by the
algorithm for shortest path, the communication cost will be 0.
\qed
\end{example}

As shown by Example~\ref{example:graph:workload}, the cost function utilized in Definition~\ref{def:kBalancedPartitionProblem}
is not proper for optimizing the specific workloads.
Next, based on a formal representation of workload, the problem of workload driven $k$-balanced graph partitioning, {\lxmProblemWordload} for short,
is introduced.

The workload on graph $G$ can be represented by a set $\Phi=\{(\phi_1,c_1),\cdots,(\phi_w,c_w)\}$, where each $\phi_i$
is a specific computing job in the workload, $c_i$ is a positive integer indicating the frequency that $\phi_i$ appears.
One job $\phi_i$ is a set  $\{(e_{ij},c_{ij})\}$, where $e_{ij}\in{E_G}$ is an edge of $G$ and $c_{ij}$ is a positive integer.
The element $(e_{ij},c_{ij})\in{\phi_i}$ means that the job $\phi_i$ needs to visit the edge $e_{ij}$ for $c_{ij}$ times, and if the two nodes
are placed on two different partitions, there may be $c_{ij}$ communications.

Given a partitioning solution $\textsc{PAR}$ and a workload representation $\Phi$ on $G$, let $\textsf{cost}_{\Phi}(\textsc{PAR})$
be the expected communication cost caused by partitioning $G$ according to $\textsc{PAR}$ and
computing $\Phi$ on $G$, which can be expressed by the following form.
\begin{align}
	\textsf{cost}_{\Phi}(\textsc{PAR}) &=\frac{1}{\sum_{1\leq i\leq w}c_i}\sum_{1\leq i\leq w}\big(c_i\cdot\textsf{cost}_{\phi_i}(\textsc{PAR})\big)\label{eq:graph:workloadCost}\\
	&=\frac{1}{\sum_{1\leq i\leq w}c_i}\sum_{1\leq i\leq w}\bigg(c_i\cdot\sum_{1\leq j\leq |\phi_i|}\big(c_{ij}\cdot\mathbf{I}(\textsc{PAR}(u_{ij})\neq\textsc{PAR}(v_{ij}))\big)\bigg)\notag
\end{align}
Here, $\textsf{cost}_{\phi_i}(\textsc{PAR})$ is the communication cost needed to evaluate the job $\phi_i$, the two nodes related with the edge $e_{ij}$ are $u_{ij}$ and $v_{ij}$,
and $\mathbf{I}(\cdot)$ is a function whose value is 1 if the required condition is satisfied, otherwise it is 0.

\begin{definition}[The {\lxmProblemWordload} Problem]\label{def:graph:WkBGP}
Given an input graph $G$, a positive integer $k$ and a workload set $\Phi$, the {\lxmProblemWordload} problem is to find a $k$-balanced partitioning solution
$\textsc{PAR}_G$ for $G$ such that the cost $\textsf{cost}_{\Phi}(\textsc{PAR}_G)$ of computing $\Phi$ is minimized.
\qed
\end{definition}

It is easy to verify that the {\lxmProblemPrevious} is a special case of the {\lxmProblemWordload} problem,
according to Theorem~\ref{theorem:graph:kBGPNPC}, we have the following result trivially.
\begin{proposition}
The {\lxmProblemWordload} problem is $\lxmNP$-complete. \qed
\end{proposition}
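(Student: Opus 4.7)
The plan is to prove the proposition by appealing to the hint in the text: exhibit a polynomial-time reduction from the classical {\lxmProblemPrevious} problem to {\lxmProblemWordload}, and observe membership in $\lxmNP$ separately. Since Theorem~\ref{theorem:graph:kBGPNPC} already establishes that {\lxmProblemPrevious} is $\lxmNP$-complete, this immediately gives $\lxmNP$-hardness of {\lxmProblemWordload}.

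First I would argue membership in $\lxmNP$. A certificate is simply a $k$-partition $\textsc{PAR}_G$; verifying that it is balanced (each $|V_i|\leq \lceil |V|/k\rceil$) and computing $\textsf{cost}_\Phi(\textsc{PAR}_G)$ from the explicit formula in Equation~\eqref{eq:graph:workloadCost} can be done in time polynomial in the size of $G$ and $\Phi$, since it is essentially a sum over the edges listed in each $\phi_i$.

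Next I would give the reduction. Given an instance $(G,k)$ of {\lxmProblemPrevious}, construct the {\lxmProblemWordload} instance $(G,k,\Phi)$ where $\Phi=\{(\phi_1,1)\}$ consists of a single job $\phi_1=\{(e,1)\mid e\in E_G\}$ that visits every edge of $G$ exactly once. Plugging this $\Phi$ into Equation~\eqref{eq:graph:workloadCost} yields
\begin{equation*}
\textsf{cost}_\Phi(\textsc{PAR}_G)=\sum_{(u,v)\in E_G}\mathbf{I}\bigl(\textsc{PAR}_G(u)\neq\textsc{PAR}_G(v)\bigr),
\end{equation*}
which is exactly the classical cut-edge cost $\textsf{cost}(\textsc{PAR}_G)$ of Equation~\eqref{eq:costPartition}. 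The set of feasible balanced partitions in the two problems coincides by Definition~\ref{def:kBalancedPartitionA}, so the reduction is cost-preserving: a $k$-balanced partition of $G$ has {\lxmProblemWordload} cost at most $c$ under $\Phi$ if and only if it has {\lxmProblemPrevious} cost at most $c$. The construction runs in $O(|E_G|)$ time.

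There is no substantive obstacle here; the proof is essentially a definition-chasing embedding of one problem inside the other, and the only mild subtlety worth flagging is the normalizing prefactor $1/\sum_i c_i$ in Equation~\eqref{eq:graph:workloadCost}, which with $w=1$ and $c_1=1$ equals $1$ and hence does not distort the comparison of costs. Combined with the membership argument, this proves {\lxmProblemWordload} is $\lxmNP$-complete.
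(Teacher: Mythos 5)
Your proposal is correct and follows the same route the paper takes: the paper simply observes that {\lxmProblemPrevious} is a special case of {\lxmProblemWordload} and invokes Theorem~\ref{theorem:graph:kBGPNPC}, while you spell out the same embedding explicitly (a single job containing every edge once, with the normalizing factor equal to $1$) together with the easy $\lxmNP$-membership check. Nothing further is needed.
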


Then, the inapproximability result of {\lxmProblemPrevious} can be extended to the {\lxmProblemWordload} problem naturally.
\begin{proposition}
Unless $\lxmPTime=\lxmNP$, there is no polynomial time algorithm for {\lxmProblemWordload} with finite approximation ratio.
\qed
\end{proposition}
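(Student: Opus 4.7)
The plan is to reduce {\lxmProblemPrevious} to {\lxmProblemWordload} in an approximation-preserving fashion, so that any finite-ratio polynomial-time algorithm for the latter would yield one for the former, contradicting Theorem~\ref{theorem:graph:kBGPAppro}. The construction is direct: given a {\lxmProblemPrevious} instance $(G,k)$ with $k\geq 3$, I would define a workload $\Phi=\{(\phi_e,1):e\in E_G\}$ on $G$, where each job $\phi_e=\{(e,1)\}$ consists of a single visit to the edge $e$ with frequency one. The resulting {\lxmProblemWordload} instance has polynomial size, since $|\Phi|=|E_G|$, so the reduction is polynomial-time.

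Next, I would verify that under this workload Equation~\eqref{eq:graph:workloadCost} collapses to $\textsf{cost}_\Phi(\textsc{PAR})=\frac{1}{|E_G|}\cdot\textsf{cost}(\textsc{PAR})$, with $\textsf{cost}(\textsc{PAR})$ being the cut-edge cost from Equation~\eqref{eq:costPartition}: every indicator term contributes exactly once per cut edge, and the outer normalizer equals $1/\sum_i c_i = 1/|E_G|$. Moreover, the balance requirement from Definition~\ref{def:kBalancedPartitionA} is used by both Definition~\ref{def:kBalancedPartitionProblem} and Definition~\ref{def:graph:WkBGP}, so the feasible regions of the two instances coincide. Consequently, the two problems share the same optimum partition (up to the global scaling $1/|E_G|$), and a partition that is $\rho$-optimal under $\textsf{cost}_\Phi$ is also $\rho$-optimal under $\textsf{cost}$.

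The argument then finishes by contradiction. If a polynomial-time algorithm $\mathcal{A}$ for {\lxmProblemWordload} with finite ratio $\rho$ existed, applying $\mathcal{A}$ to the constructed instance would produce a $k$-balanced partition $\textsc{PAR}$ with $\textsf{cost}_\Phi(\textsc{PAR})\leq \rho\cdot\textsf{cost}_\Phi(\textsc{PAR}^*)$; multiplying through by $|E_G|$ cancels the normalizer and yields $\textsf{cost}(\textsc{PAR})\leq \rho\cdot\textsf{cost}(\textsc{PAR}^*)$, i.e., a polynomial-time finite-ratio approximation for {\lxmProblemPrevious}, contradicting Theorem~\ref{theorem:graph:kBGPAppro} unless $\lxmPTime=\lxmNP$. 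There is no substantial obstacle; the one point worth flagging is that the instance-dependent factor $1/\sum_i c_i$ in Equation~\eqref{eq:graph:workloadCost} is constant in the partition variable and therefore cancels in any ratio comparison, so inapproximability genuinely transfers from {\lxmProblemPrevious} to {\lxmProblemWordload}.
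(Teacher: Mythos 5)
Your proposal is correct and follows essentially the same route as the paper, which simply observes that {\lxmProblemPrevious} is a special case of {\lxmProblemWordload} (via the natural per-edge workload encoding) and transfers the inapproximability of Theorem~\ref{theorem:graph:kBGPAppro}; you merely spell out the details (the collapse of $\textsf{cost}_\Phi$ to the cut cost up to the constant factor $1/|E_G|$, which cancels in the ratio) that the paper leaves implicit as ``naturally.''
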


\subsection{Motif Based Balanced Graph Partitioning Problem}

Motif computation is a hot topic in the area of graph data processing, which has become
a typical kind of computing tasks in big data platforms.
The motivation of studying motif computation comes from the traditional graph matching problem,
which is to determine whether a given pattern graph $Q$ appears in a given data graph $G$.
Since the graph matching problem is intractable, the most popular and feasible
solution is to design efficient algorithms for the pattern graphs with small sizes.
Motif is just the graph of small sizes, such as 3 or 4 nodes.
In this part, the motif based balanced graph partitioning problem ({\lxmProblemMotif} for short) is introduced.

\begin{figure}[ht]
	\centering
	\subfigure[A 2-balanced partitioning solution $\textsc{PAR}_1$ of $G$ and the number of motifs destroyed by $\textsc{PAR}_1$]{
		\label{fig:graph:motifExample:A}
		\begin{minipage}{0.39\textwidth}\centering
			\resizebox{\textwidth}{!}{ %
				\begin{tikzpicture}[ auto,swap]
					\foreach \pos/\name/\vertexID in {{(0,2.5)/a/1}, {(4,4)/b/2}, {(4.5,1)/c/3},
						{(-2,-1)/d/4}, {(2.5,-1.5)/e/5}, {(0,-3)/f/6}, {(5.5,-4)/g/7}, {(9,-3)/h/8}, {(10,0)/i/9}, {(8,3)/j/0}}
					\node[vertex] (\name) at \pos {$v_{\vertexID}$};
					\foreach \source/ \dest /\weight in {b/a/v1, c/j/v2,d/a/v3,i/j/v4,
						b/c/v5, e/c/v6,a/f/v7,
						f/d/v8,f/e/v9,
						g/e/v10,g/c/v11,a/e/v12,g/h/v13,h/j/v14,b/j/v15,b/h/v16,c/h/17}
					\path[edge] (\source) -- node[pos=0.5,minimum height=10pt,minimum width=0pt](\weight) {} (\dest);
					\draw[line width=5pt,loosely dashed,purple] (v1.center) to[out=-30,in=140] (v6.center) to[out=-40,in=160] (v11.center) to[out=-20] (v13.center);
					\foreach \source / \dest in {c/e,c/g,g/e,g/h,c/h}
					\path[selected edge] (\source) -- (\dest);
			\end{tikzpicture}}
	\end{minipage}}
	~~~~~~
	\subfigure[A 2-balanced partitioning solution $\textsc{PAR}_3$ of $G$ and the number of motifs destroyed by $\textsc{PAR}_3$]{
		\label{fig:graph:motifExample:B}
		\begin{minipage}{0.39\textwidth}\centering
			\resizebox{\textwidth}{!}{ %
				\begin{tikzpicture}[ auto,swap]
					\foreach \pos/\name/\vertexID in {{(0,2.5)/a/1}, {(4,4)/b/2}, {(4.5,1)/c/3},
						{(-2,-1)/d/4}, {(2.5,-1.5)/e/5}, {(0,-3)/f/6}, {(5.5,-4)/g/7}, {(9,-3)/h/8}, {(10,0)/i/9}, {(8,3)/j/0}}
					\node[vertex] (\name) at \pos {$v_{\vertexID}$};
					\foreach \source/ \dest /\weight in {b/a/v1, c/j/v2,d/a/v3,i/j/v4,
						b/c/v5, e/c/v6,a/f/v7,
						f/d/v8,f/e/v9,
						e/g/v10,g/c/v11,a/e/v12,g/h/v13,h/j/v14,j/b/v15,b/h/v16,c/h/17}
					\path[edge] (\source) -- node[pos=0.5,minimum height=10pt,minimum width=0pt](\weight) {} (\dest);
					\draw[line width=5pt,loosely dashed,purple] (v1.center) to[out=-30,in=140] (v6.center) to[out=-60] (v10.south);
					\draw[line width=5pt, dashed,purple] (v4.north) to ([xshift=-.3cm,yshift=-.8cm]v4.center);
					\foreach \source / \dest in {c/e,c/g,g/e}
					\path[selected edge] (\source) -- (\dest);
			\end{tikzpicture}}
	\end{minipage}}
\caption{An Example of Balanced Graph Partitioning Driven by Motif Computation}
	\label{fig:graph:motifExample}
\end{figure}
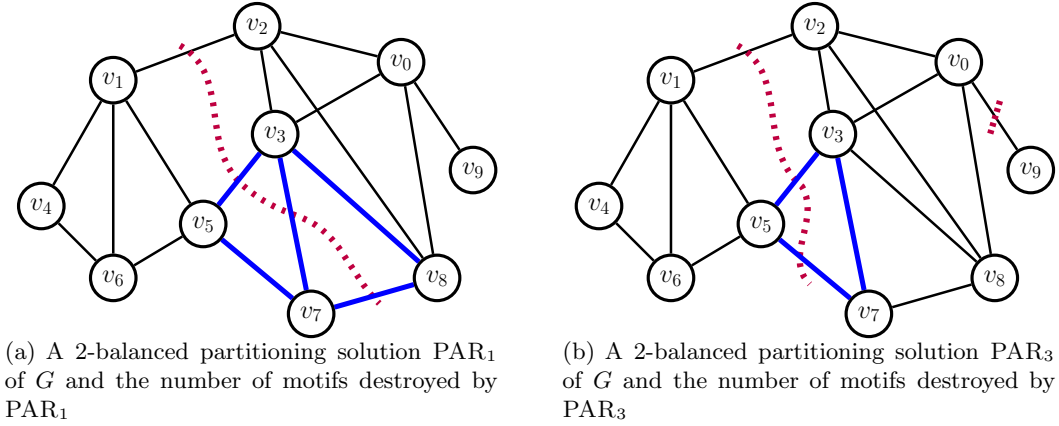

\begin{example}\label{example:graph:motif}
Consider the graph $G$ shown in Fig.~\ref{fig:graph:workloadExample:A}, and let $k=2$, that is we need to compute a 2-balanced partitioning solution.
There are two possible 2-balanced partitioning solutions shown in Fig.~\ref{fig:graph:motifExample:A} and \ref{fig:graph:motifExample:B}.
The partitioning solution shown in Fig.~\ref{fig:graph:motifExample:A} is the same as the one shown in Fig.~\ref{fig:graph:workloadExample:C},
we still denote it by $\textsc{PAR}_1$.
The solution shown in Fig.~\ref{fig:graph:motifExample:B} is denoted by $\textsc{PAR}_3$.
In $\textsc{PAR}_1$, the graph $G$ is partitioned into two sets $\{v_1,v_4,v_5,v_6,v_7\}$ and $\{v_0,v_2,v_3,v_8,v_9\}$, the corresponding cut edges are $\{(v_1,v_2),(v_3,v_5),(v_3,v_7),(v_7,v_8)\}$, 
and the partitioning cost is 4, that is $\textsf{cost}(\textsc{PAR}_1)=4$.
In $\textsc{PAR}_3$, the graph $G$ is partitioned into two sets $\{v_1,v_4,v_5,v_6,v_9\}$ and $\{v_0,v_2,v_3,v_7,v_8\}$,
the corresponding cut edges are $\{(v_1,v_2),(v_3,v_5),(v_5,v_7),(v_0,v_9)\}$,
and the partitioning cost is also 4, that is $\textsf{cost}(\textsc{PAR}_3)=4$.
It can be found that, using the traditional partitioning cost measure, no one of the two solutions is better than the other one.
Next, let us consider a specific job of motif computation, whose aim is to calculate the number of triangles in $G$.
Obviously, no matter which method is utilized, if the three nodes of one triangle are not in one partition, there
will be at least one communication.
In Fig.~\ref{fig:graph:motifExample:A} and \ref{fig:graph:motifExample:B}, the triangles destroyed by the partitioning solution are
colored with blue.
Then, by counting the number of triangles destroyed, the solution $\textsc{PAR}_3$ should be better than $\textsc{PAR}_1$.
\qed
\end{example}

According to Example~\ref{example:graph:motif}, it can be found that the $\textsf{cost}(\cdot)$ function used by Definition~\ref{def:kBalancedPartitionProblem} is
based on counting the cut edges and it is not proper for optimizing the tasks of motif computation.
Next, the cost function for motif computation is introduced first, and then the definition
of the motif computation based $k$-balanced graph partitioning ({\lxmProblemMotif} for short) is introduced.

Let $c$ be a constant, a $c$-Motif is a graph with $c$ nodes.
Given a partitioning solution  $\textsc{PAR}$ of $G$ and a motif $M$, intuitively, let $\textsf{cost}_{M}(\textsc{PAR})$ be the communication cost of processing
computation of $M$ on partitions obtained according to $\textsc{PAR}$.
\begin{align}
	\textsf{cost}_{M}(\textsc{PAR}) &=\sum_{H\in{G\langle M\rangle}}\mathbf{I}\big(\exists(u,v)\in{H}\text{s.t.}\textsc{PAR}(u)\neq\textsc{PAR}(v)\big)\label{eq:graph:motifCost}
\end{align}
Here, $G\langle M\rangle$ means all appearance of the motif $M$ in the graph $G$, and $\mathbf{I}(\cdot)$ is the indicator function whose value is in  \{0,1\}.

\begin{definition}[The {\lxmProblemMotif} problem]\label{def:graph:MkBGP}
Given a graph $G$, a positive interger $k$ and a $c$-Motif $M$,
the {\lxmProblemMotif} problem is to compute a $k$-balanced partitioning solution $\textsc{PAR}_{G}$ for $G$
such that the cost $\textsf{cost}_{M}(\textsc{PAR}_G)$ is minimal.
\qed
\end{definition}

Different from the {\lxmProblemWordload} problem, the {\lxmProblemPrevious} problem is not a special case of {\lxmProblemMotif},
therefore, the complexity result of {\lxmProblemMotif} can not be obtained directly from Theorem~\ref{theorem:graph:kBGPNPC}.

\section{Graph Partitioning Algorithm for Workload Optimization}
\label{sec:graph:wkbgpalg}

\subsection{Semidefinite Programming}
\eat{
Semidefinite programming\cite{Alizadeh95InteriorPoint} is a sophisticated technique for solving complex optimization problem,
which can be solved within polynomial time in practice \cite{Alizadeh95InteriorPoint}.
}

Let $\bm{X}$ be a symmetric square $n\times n$ matrix. 
If all eigenvalues of $\bm{X}$ are non positive, it is also called a positive semidefinite matrix, psd for short.
\eat{
Then, the following statements can be proved to be equivalent.
(1) $\bm{X}$ is psd, denoted by $\bm{X}\succcurlyeq 0$.
(2) For any $n$-dimensional vector $\bm{y}\in\mathbb{R}^n$, we have  $\bm{y}^T\bm{X}\bm{y}\geq 0$.
(3) There exists a $m\times n$ matrix $\bm{V}$ such that $\bm{X}=\bm{V}^T\bm{V}$.
}

\eat{
Accoridng to \cite{Du11ApproximationAlgorithm}, given two square matrixs $\bm{A}=(a_{ij})_{n\times n}$ and $\bm{B}=(b_{ij})_{n\times n}$,
the \emph{Frobenius product} of $\bm{A}$ and $\bm{B}$ can be defined as follows.
\begin{equation}\label{eq:graph:frobeniusproduct}
	\bm{A}\bullet\bm{B}=\sum_{i=1}^{n}\sum_{j=1}^{n}a_{ij}b_{ij}
\end{equation}

Then, a general semidefinite program, SDP for short, can be expressed as follows, where $\bm{X}=(x_{ij})_{n\times n}$ is a matrix of variables, $\bm{C}=(c_{ij})_{n\times n}$ and $\bm{A}_k=(a^k_{ij})_{n\times n}$ ($k\in[1,m]$) are used
to express the constraints. 
\begin{align}\label{eq:graph:sdpdefinition}
	\text{minimize} \hspace{10pt}& \bm{C}\bullet\bm{X}\\
	\text{s.t.} \hspace{10pt}& \bm{A}_k\bullet\bm{X}=b_k,\hspace{20pt}\forall k\in[1,l]\notag\\
	& \bm{X}\succcurlyeq 0\notag
\end{align}
}

\eat{
\begin{figure}[t]
	\centering
		\begin{adjustbox}{width=\textwidth}
			\begin{tikzpicture}[auto,swap,>=latex]
				\newcommand{\myunit}{1 cm}
				\tikzset{
					node style sp/.style={draw=none,circle,minimum size=\myunit},
					node style ge/.style={circle,minimum size=\myunit},
					arrow style mul/.style={draw,sloped,midway,fill=white},
					arrow style plus/.style={midway,sloped,fill=white},
				}
				\matrix (A) [matrix of math nodes, nodes = {node style ge},left delimiter  = (, right delimiter = )] at (7*\myunit,6.5*\myunit){%
					v_{11} & v_{12} & \ldots & |[node style sp]|v_{1j} & \ldots & v_{1n} \\
					v_{21}& v_{22}	& \ldots & |[node style sp]| v_{2j} & \ldots & v_{2n} \\
					\vdots & \vdots & \ddots & \vdots  & \ddots &\vdots\\
					v_{m1} & v_{m2} & \ldots & |[node style sp]| v_{mj} &\ldots & v_{mn}  \\
				};
				\node [draw=none,right=5pt] at (A.east) {$\bm{V}$};
				\matrix (B) [matrix of math nodes, nodes = {node style ge}, left delimiter  = (, right delimiter = )] at  (0,0){%
					v_{11} &  v_{12} & \ldots & v_{1m}  \\
					v_{21} & v_{22}  & \ldots & v_{2m}  \\
					\vdots & \vdots & \ddots & \vdots  \\
					|[node style sp]| v_{i1} & |[node style sp]| v_{i2}
					& \ldots & |[node style sp]| v_{im}  \\
					\vdots & \vdots & \ddots & \vdots  \\
					v_{n1} &  v_{n2} & \ldots & v_{nm}  \\
				};
				\node [draw=none,left=15pt] at (B.west) {$\bm{V}^T$};
				%
				\matrix (C) [matrix of math nodes, nodes = {node style ge}, left delimiter  = (, right delimiter = )] at (7*\myunit,0){%
					x_{11} & x_{12} & \ldots & x_{1j} & \ldots & x_{1n}\\
					x_{21} & |[node style sp]| x_{22}	& \ldots & x_{2j} & \ldots & x_{2n}\\
					\vdots & \vdots & \ddots & \vdots & \ddots& \vdots\\
					x_{i1} & |[node style sp]| x_{i2} & \ldots & |[draw,circle,blue,line width=1pt]| x_{ij} & \ldots& x_{in}\\
					\vdots & \vdots & \ddots & \vdots & \ddots& \vdots\\
					x_{n1} & x_{n2} & \ldots & x_{nj} & \ldots & x_{nn}\\
				};
				\node[draw=none,right=5pt] at (C.east) {$\bm{X}$};
				\node[draw,blue,line width=1pt,minimum height=4*\myunit,minimum width=.8*\myunit] at (A-2-4.south) 
				{};
				\node[draw,blue,line width=1pt,minimum width=4*\myunit,minimum height=.8*\myunit] at (B-4-3.west) 
				{};
				\foreach \i/\itext in {1/1,2/2,4/j,6/n}
				\node[draw=none,above=0pt] at (A-1-\i.north) {$\bm{v}_{\itext}$};
				\foreach \j/\jtext in {1/1,2/2,4/i,6/n}
				\node[draw=none,left = 7pt] at (B-\j-1.west) {$\bm{v}^T_{\jtext}$};
				\node[draw=none,above=4pt] at (A-1-3.north) {$\ldots$};
				\node[draw=none,above=4pt] at (A-1-5.north) {$\ldots$};
				\node[draw=none,left=11pt] at (B-3-1.west) {$\vdots$};
				\node[draw=none,left=11pt] at (B-5-1.west) {$\vdots$};
				\node[draw,rectangle,line width=1pt] at (0,6*\myunit) {\large$\bm{X}=\bm{V}^T\bm{V}$};
			\end{tikzpicture}
	\end{adjustbox}
\caption{An Example of Representing SDP by Vector Program}
\label{fig:graph:sdpToVectorExample}
\end{figure}
}

According to the characteristics of semidefinite matrix, the matrix $\bm{X}$ can be represented by $\bm{X}=\bm{V}^T\bm{V}$ also, where $\bm{V}$ is a $m\times n$ matrix.
If each column of the matrix $\bm{V}$ is treated as a $m$-dimensional vector,
that is $\bm{V}$ is composed of $n$ vectors $\bm{v}_1$, $\bm{v}_2$, $\ldots$, $\bm{v}_n$.
Then, each element $x_{ij}$ in $\bm{X}$ can be represented by $\sum_{1\leq h\leq m}v_{ih}v_{hj}$.
Obviously, $x_{ij}$ can be represented by $\bm{v}^T_i\bm{v}_j$ or $\bm{v}_i\cdot\bm{v}_j$.
Therefore, a typical semidefinite program can be represented by the following form.
\begin{align}\label{eq:graph:sdpVecdefinition}
	\text{minimize} \hspace{10pt}& \sum_{i,j}c_{ij}(\bm{v}_i\cdot\bm{v}_j)\\
	\text{s.t.} \hspace{10pt}& \sum_{i,j}a^k_{ij}(\bm{v}_i\cdot\bm{v}_j)
	=b_k,\hspace{20pt}\forall k\in[1,l]\notag\\
	& \bm{v}_i\in\mathbb{R}^m,\hspace{20pt}\forall i\in[1,n]\notag
\end{align}

According to \cite{Alizadeh95InteriorPoint,Du11ApproximationAlgorithm}, 
given a SDP of size $n$ and $\epsilon>0$, there exists a $O\big((n+1/\epsilon)^{O(1)}\big)$-time algorithm solving the SDP with an approximation ratio $(1+\epsilon)$.

\subsection{The Semidefinite Program for {\lxmProblemWordload}}
In this part, we will represent the {\lxmProblemWordload} problem using semidefinite program.
The intuitive idea can be explained as follows.
Each node in $V$ can be represented as a $m$-dimensional vector, where $m\geq k$, and it is denoted by $\bm{v}$.
Ideally, it is expected that the vector $\bm{v}$ can indicate which partition the node should be placed into.
When  two nodes are in the same partition, the corresponding vectors should be same or very similar.
Given the input $(G,k,\Phi)$ of the {\lxmProblemWordload} problem, the semidefinite representation can be expressed as Equation~\eqref{eq:graph:WkBGPsdp:A}.
\begin{align}
	\min\hspace{1em} &\frac{1}{\sum_{1\leq i\leq w}c_i}\sum_{1\leq i\leq w}\bigg(c_i\cdot\sum_{1\leq j\leq |\phi_i|}\big(c_{ij}\cdot\frac{1}{2}\|\bm{u}_{ij}-\bm{v}_{ij}\|^2_2\big)\bigg)\label{eq:graph:WkBGPsdp:A}\\
	\text{s.t.}\hspace{1em}&\|\bm{u}-\bm{v}\|_2^2+\|\bm{v}-\bm{w}\|_2^2\geq \|\bm{u}-\bm{w}\|_2^2\hspace{2em}\forall u,v,w\in{V}\label{eq:graph:WkBGPsdp:B}\\
	&\sum_{v\in{S}}\frac{1}{2}\|\bm{u}-\bm{v}\|^2_2\geq|S|-\frac{n}{k}\hspace{5em}\forall S\subseteq{V},u\in{S}\label{eq:graph:WkBGPsdp:C}
\end{align}

Here, $\|\cdot\|_2^2$ used in Equation~\eqref{eq:graph:WkBGPsdp:A} is the square of $L2$ normal form, that is $\ell_2^2$, which can be
defined as $\|\bm{x}\|^2_2=\sum_{i=1}^{m}x_i^2$.
The first constraint shown by Equation~\eqref{eq:graph:WkBGPsdp:B} is the triangle inequality defined on $\ell_2^2$, which
is firstly proposed by \cite{Arora04ExpanderFlows} and used to express the partitioning requirements.
The second constraint shown by Equation~\eqref{eq:graph:WkBGPsdp:C} is also called the spreading constraint, which
is proposed by \cite{Even00DivideConquer} and used to control the size of each partition.
The optimizing goal shown in Equation~\eqref{eq:graph:WkBGPsdp:A} is a variant of the cost function $\textsf{cost}_{\Phi}(\cdot)$ utilized in the
definition of the {\lxmProblemWordload} problem, where the part $\mathbf{I}(\textsc{PAR}(u_{ij})\neq\textsc{PAR}(v_{ij}))$ of $\textsf{cost}_{\Phi}(\cdot)$
is replaced with $\frac{1}{2}\|\bm{u}_{ij}-\bm{v}_{ij}\|^2_2$.
Intuitively, for two nodes $u_{ij}$ and $v_{ij}$, if the distance between $\bm{u}_{ij}$ and $\bm{v}_{ij}$ is too large (like $\approx 2$), it means that
$\textsc{PAR}(u_{ij})\neq\textsc{PAR}(v_{ij})$, otherwise, it means that $\textsc{PAR}(u_{ij})=\textsc{PAR}(v_{ij})$.
Therefore, the optimizing goal shown by Equation~\eqref{eq:graph:WkBGPsdp:A} is indeed a relaxed form of the cost function
$\textsf{cost}_{\Phi}(\cdot)$, and the transformation is roughly equivalent.

It should be noted that, given an instance of the {\lxmProblemWordload} problem, the semidefinite program shown in Equation~\eqref{eq:graph:WkBGPsdp:A}
will not be constructed explicitly, because the size of the constraints is huge. 
More specifically, the goal of Equation~\eqref{eq:graph:WkBGPsdp:A} can be built within $O(|\Phi|)$,
and the constraints in Equation~\eqref{eq:graph:WkBGPsdp:B} are built with respect to each triple of nodes in $G$ and it can be
finished within $O(|V_G|^3)$ obviously.
However, the constraints shown in Equation~\eqref{eq:graph:WkBGPsdp:C} are built based on each possible subset of $V_G$,
therefore, the size of the constraints can reach $\Theta(|V_G|\cdot 2^{|V_G|})$, which can not be built within polynomial time.
A feasible way to process the above constraints is to built a specific constraint when it is needed,
since the constraint can be defined over every possible subset of $V_G$ and it can be checked efficiently.
Therefore, the semidefinite program is solved based on the optimizing goal and the constraints in Equation~\eqref{eq:graph:WkBGPsdp:B} first,
and then for each $v\in{V_{G}}$ the corresponding constraint can be checked as the following way.
First, all nodes like $u\in V\setminus\{v\}$ are computed, and then the corresponding distance $\|\bm{u}-\bm{v}\|_2^2$ is calculated and the nodes
will be sorted into an increasing order. Finally, the nodes will be added into the set $S$ and checked whether the condition defined by Equation~\eqref{eq:graph:WkBGPsdp:C}
is satisfied.
According to Lemma~\ref{lemma:graph:sdp}, we have the following result.

\begin{theorem}\label{theorem:graph:sdpSolution}
The semidefinite program shown in Equation~\eqref{eq:graph:WkBGPsdp:A} can be solved in polynomial time.\qed
\end{theorem}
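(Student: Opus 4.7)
The plan is to reduce the theorem to the standard polynomial-time solvability of SDPs (Lemma~\ref{lemma:graph:sdp}) by exhibiting a polynomial-time separation oracle, since the only obstacle to a direct application is that the spreading constraints in Equation~\eqref{eq:graph:WkBGPsdp:C} are exponentially many in $|V_G|$. First I would observe that both the objective in Equation~\eqref{eq:graph:WkBGPsdp:A} and the triangle-inequality constraints in Equation~\eqref{eq:graph:WkBGPsdp:B} are of polynomial size: the objective has $O(|\Phi|)$ terms and there are at most $O(|V_G|^3)$ triangle constraints, all of which can be written down explicitly. Thus the only non-trivial part is to handle the family of spreading constraints indexed by pairs $(S,u)$ with $u\in S\subseteq V_G$.

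Next I would construct a separation oracle for Equation~\eqref{eq:graph:WkBGPsdp:C}. Given a candidate vector solution $\{\bm{v}\}_{v\in V_G}$, for each fixed $u\in V_G$ I compute the squared distances $d_{uv}=\tfrac{1}{2}\|\bm{u}-\bm{v}\|_2^2$ for all $v\neq u$ and sort them in non-decreasing order. The key observation is that the spreading constraint is hardest to satisfy when $S$ is chosen greedily: for a prescribed target size $|S|=s$, the subset minimizing $\sum_{v\in S}d_{uv}$ is exactly the one containing $u$ together with the $s-1$ nearest vertices. Hence the oracle only needs to test, for each $u$ and each $s\in\{1,\dots,|V_G|\}$, whether the greedy prefix sum is at least $s-\tfrac{n}{k}$. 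This can be done in $O(|V_G|^2\log |V_G|)$ time overall, and it returns a violated constraint whenever one exists.

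I would then plug this oracle into the ellipsoid-method formulation of SDP, or equivalently appeal to the result cited from \cite{Alizadeh95InteriorPoint,Du11ApproximationAlgorithm} which is already quoted in the preceding subsection: given a polynomial-time separation oracle for the feasible region, one can solve the SDP to within additive error $\epsilon$ in time $O((N+1/\epsilon)^{O(1)})$ where $N$ is the encoding length of the objective plus the explicitly-listed constraints. Combining the polynomial bound on the objective and triangle constraints with the polynomial-time oracle for the spreading constraints yields an overall polynomial running time.

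The main obstacle, and really the only substantive step, is justifying the greedy characterization of the worst-case spreading constraint: one must argue that among all subsets $S\ni u$ of a given cardinality $s$ the quantity $\sum_{v\in S}d_{uv}$ is minimized by taking the $s-1$ points closest to $u$, so that it suffices to scan over $s$ rather than over all $2^{|V_G|}$ subsets. This is a straightforward exchange argument, but it is the place where one is actually using structure of the constraint family rather than generic SDP machinery.
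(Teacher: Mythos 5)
Your proposal is correct, and it overlaps with the paper on the key structural point: the paper also handles the exponentially many spreading constraints \eqref{eq:graph:WkBGPsdp:C} lazily, by sorting, for each node $u$, the distances $\frac{1}{2}\|\bm{u}-\bm{v}\|_2^2$ in increasing order and checking prefix sums — exactly your greedy separation oracle — although the paper only sketches this in the text preceding the theorem and never supplies the exchange argument that you correctly identify as the substantive justification. Where you diverge is in the body of the proof: the paper does not argue via the ellipsoid method with an oracle at all. Instead it bounds the range of the objective of \eqref{eq:graph:WkBGPsdp:A} by constructing an explicit trivial feasible solution (assigning blocks of $n/k$ nodes the vectors $[\sqrt{n},0,\dots,0]^T$, $[0,\sqrt{n},0,\dots,0]^T$, etc., and verifying both \eqref{eq:graph:WkBGPsdp:B} and \eqref{eq:graph:WkBGPsdp:C} hold), concludes the optimum lies in $[0,|V_G|\cdot|E_G|]$, and then invokes the quoted $(1+\epsilon)$-approximate solver of \cite{Alizadeh95InteriorPoint,Du11ApproximationAlgorithm} with $\epsilon<\frac{1}{|V_G|\cdot|E_G|}$, claiming the output is then exact, with running time polynomial in $n+|V_G|\cdot|E_G|$. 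Each route buys something the other lacks: your oracle-based argument is the cleaner and more rigorous answer to why the exponential constraint family is not an obstacle, but your appeal to "ellipsoid plus separation oracle" leaves implicit the bounds the numerical method needs — the quoted solver result is stated for explicitly given SDPs, and to make the oracle route airtight you still need a bound on the value (or norm) of an optimal solution to initialize and terminate the method, which is precisely what the paper's trivial feasible solution provides. Adding one sentence constructing such a feasible point (any balanced assignment of nodes to $k$ scaled unit vectors works) and noting the resulting upper bound on the optimum would make your version complete and, arguably, tighter than the paper's, since the paper's claim that a $(1+\epsilon)$-approximation becomes exact for small $\epsilon$ is itself questionable for a real-valued objective, whereas your approximate-solution formulation sidesteps that issue.
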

\begin{proof}
Let $N$ be the size of the input instance of the {\lxmProblemWordload} problem,
where $n$ is the number of nodes in $G$.
Obviously, we have $N=\max\{n,|\Phi|\}$.
Let us consider the range value of the optimized goal of Equation~\eqref{eq:graph:WkBGPsdp:A}.
First, the part related with $c_i$ can be treated as the weighted sum of the expression $\sum_{1\leq j\leq |\phi_i|}(c_{ij}\cdot\frac{1}{2}\|\bm{u}_{ij}-\bm{v}_{ij}\|_2^2)$,
and it can not be larger than $\max\{\sum_{1\leq j\leq |\phi_i|}(c_{ij}\cdot\frac{1}{2}\|\bm{u}_{ij}-\bm{v}_{ij}\|_2^2)\;|\;i\in[1,w]\}$.
Let $\gamma$ be the constant $\max\{\sum_{1\leq j\leq |\phi_i|}c_{ij}\;|\;i\in[1,w]\}$.
If the optimizing goal of Equation~\eqref{eq:graph:WkBGPsdp:A} is divided by $\gamma$, which will not change the optimized partitioning solution obviously,
the optimizing goal after the division will not be larger than the weighted sum of the  distance $\frac{1}{2}\|\bm{u}_{ij}-\bm{v}_{ij}\|_2^2$ over
all edges $(u_{ij},v_{ij})\in{E_G}$.
Then, let the edge with the maximized distance be $(u_{max},v_{max})$ and the corresponding distance be $\frac{1}{2}\|\bm{u}_{max}-\bm{v}_{max}\|_2^2$.
As a result, the final goal can be limited within the range $[0,|E_G|\cdot \frac{1}{2}\|\bm{u}_{max}-\bm{v}_{max}\|_2^2]$.

Moreover, let us consider a trivial solution for the semidefinite program.
First, randomly select $\frac{n}{k}$ nodes of $V$ and represent them as $v_1,\cdots,v_{n/k}$.
The corresponding $k$-dimensional vectors is set to be $\bm{v}_i=[\sqrt{n},0,0,\cdots,0]^T$.
Then, repeat the above operations until all nodes in $V$ have been selected, for each step,
the value $\sqrt{n}$ is moved backward one position.
Since the solution should be $m$-dimensional vectors, for the other $m-k$ dimensions, let the corresponding value be 0 always.

It is not hard to verify that the above trivial solution satisfy the two groups of constraints.
First, for two given nodes $u$ and $v$, the corresponding vector distance $\frac{1}{2}\|\bm{u}-\bm{v}\|^2_2$ only has two possible values, 0 or $n$, and
the value is 0 if and only if the two nodes are selected in the same step during the above construction.
Then, for the constraints shown in Equation~\eqref{eq:graph:WkBGPsdp:B}, it can be verified by analyzing the distances among three nodes,
which have the following two cases.
(1) The three nodes have same vectors, then the values of $\|\bm{u}-\bm{v}\|_2^2$, $\|\bm{u}-\bm{w}\|_2^2$ and $\|\bm{w}-\bm{v}\|_2^2$ are all 0, the constraints are satisfied trivially.
(2) There are at least two nodes having different vectors. Without loss of generality, we assume that $\bm{u}$ and $\bm{v}$ are different, then, at least two of the three values $\|\bm{u}-\bm{v}\|_2^2$, $\|\bm{u}-\bm{w}\|_2^2$ and $\|\bm{w}-\bm{v}\|_2^2$ are $2n$. Then, the constraints can be satisfied too.
Second, to verify the constraints shown in Equation~\eqref{eq:graph:WkBGPsdp:C} are satisfied, let us consider two cases based on the size of $S$.
(1) The first case is $|S|\leq\frac{n}{k}$. For this case, the right side of the constraint satisfies $|S|-\frac{n}{k}\leq 0$ and the left side is at least 0.
Therefore, the constraints can be satisfied.
(2) The second case is $|S|>\frac{n}{k}$.
For this case, since during the construction of the trivial solution each step only selects at most $n/k$ nodes,
in $S$, there must be at least one node $v$ whose vector is different from the one of $u$.
Then, for the left side of the constraint, we have $\sum_{v\in{S}}\frac{1}{2}\|\bm{u}-\bm{v}\|_2^2\geq n$, and for the right side, there must be $|S|-\frac{n}{k}\leq n-\frac{n}{k}<n$.
Thus, the condition $\sum_{v\in{S}}\frac{1}{2}\|\bm{u}-\bm{v}\|_2^2\geq|S|-\frac{n}{k}$ can be satisfied obviously.

Because the trivial solution constructed is a feasible solution, the corresponding cost	is an upper bound of the optimize cost,
then the value of the optimizing goal can be set within the range $[0,|V_G|\cdot|E_G|]$.
Further, according to Lemma~\ref{lemma:graph:sdp}, by setting the parameter $\epsilon$ to satisfy $\epsilon<\frac{1}{|V_G|\cdot|E_G|}$,
the $(1+\epsilon)$-approximation solution of the corresponding semidefinite program is an exact solution in fact.
Finally, the time cost of can be bounded by a polynomial of $(n+|V_G|\cdot|E_G|)$.
The proof is finished.
\qed
\end{proof}
\subsection{SDP Based Approximation Algorithm for WkBGP}
This part introduces the partitioning algorithm solving the {\lxmProblemWordload} problem,
which computes an approximation partitioning solution for the given {\lxmProblemWordload} instance
using the semidefinite programming and a well designed rounding techniques.
The performance guarantee of the approximation solution can be bounded based on analyzing the effects
of the rounding method.

\begin{algorithm}[t]
	\caption{{\lxmAlgorithmWorkload}~(Workload driven $k$ Balanced Graph Partitioning)\label{alg:graph:WkBGPAlg}}
	\SetKwProg{Fn}{function}{}{end}
	\KwIn{A Graph $G=\{V,E\}$, an integer $k$ and a workload set $\Phi$.}
	\KwOut{A partition $\textsc{PAR}_G=\{V_1,V_2,\cdots,V_k\}$ for $G$.}
	\Fn{\textsc{WkBGPartition}($G, k, \Phi$)}{%
		{Initialize a $|V_G|\times|V_G|$ array $W$ to be 0}\;
		{Let $C=\sum_{1\leq i\leq w}c_i$}\;
		\For{each $(\phi_x,c_x)\in{\Phi}$}{
			\For{each $(e_{xy},c_{xy})\in{\phi_x}$}{
				{Let the two nodes of $e_{xy}$ be $v_i$ and $v_j$}\;
				{$W[i][j]+=\frac{c_x\cdot c_{xy}}{C}$}\;
				{$W[j][i]=W[i][j]$}\;
			}
		}
		{Build sdp \eqref{eq:graph:WkBGPsdp:Improve}  using $\sum_{(v_i,v_j)\in{E_G}}W[i][j]\cdot\frac{1}{2}\|\bm{v}_i-\bm{v}_j\|_2^2$ as the goal}\;
		{Let $opt=\{\bm{v}_1,\cdots,\bm{v}_n\}$ be the optimal solution of sdp \eqref{eq:graph:WkBGPsdp:Improve}}\;
		{Invoke $k$-\textsc{Partition} of \cite{Krauthgamer09PartitioningGraphs} on $opt$ to compute $\textsc{PAR}_G=\{V_1,\cdots,V_k\}$}\;
		\textbf{return} {$\textsc{PAR}_G$}\;
	}
\end{algorithm}

Given the workload input $\Phi$, different jobs may use same edges, obviously, in a partitioning solution,
it should be determined whether two nodes are placed in the same partition.
To use the semidefinite programming to solve the {\lxmProblemWordload} problem,
by combining the variables related with the same edge, we can rewritten the problem by the
following Equation~\eqref{eq:graph:WkBGPsdp:Improve}, which can be verified to be equivalent to Equation~\eqref{eq:graph:WkBGPsdp:A}.

\begin{align}
	\min\hspace{1em} &\sum_{(v_i,v_j)\in{E_G}}\big(W_{ij}\cdot\frac{1}{2}\|\bm{v}_i-\bm{v}_j\|_2^2\big)\label{eq:graph:WkBGPsdp:Improve}\\
	\text{s.t.}\hspace{1em}&\|\bm{u}-\bm{v}\|_2^2+\|\bm{v}-\bm{w}\|_2^2\geq \|\bm{u}-\bm{w}\|_2^2\hspace{2em}\forall u,v,w\in{V}\notag\\
	&\sum_{v\in{S}}\frac{1}{2}\|\bm{u}-\bm{v}\|^2_2\geq|S|-\frac{n}{k}\hspace{5em}\forall S\subseteq{V},u\in{S}\notag\\
	&W_{ij}=\sum_{1\leq x\leq w}\sum_{1\leq y\leq |\phi_x|}\frac{c_x\cdot c_{xy}\cdot\mathbf{I}(e_{xy}=(v_i,v_j))}{C} \notag\\
	& C=\sum_{1\leq i\leq w} c_i\notag
\end{align}

Next, the {\lxmAlgorithmWorkload} algorithm for solving the {\lxmProblemWordload} problem is introduced,
whose details can be found in Algorithm~\ref{alg:graph:WkBGPAlg}.
The main procedure of {\lxmAlgorithmWorkload} is to first construct the semidefinite program according to Equation~\eqref{eq:graph:WkBGPsdp:Improve}
based on the input graph $G$, the workload  $\Phi$ and the integer $k$, then use the program solver to obtain the optimized solution for
the relaxed variant, finally, computes the result partitioning solution by rounding the vectors obtained in the optimized solution.
The transformation between the two semidefinite programs is done under the help of a $|V_G|\times|V_G|$ array $W$.
Specifically, {\lxmAlgorithmWorkload} scans all jobs maintained in the workload set $\Phi$, and 
adds the corresponding weight related with each edge used in the computing jobs into $W$ (line 2-8).
Then, {\lxmAlgorithmWorkload} utilizes $W$ to construct the semidefinite program shown in Equation~\eqref{eq:graph:WkBGPsdp:Improve} (line 9).
Let $opt=\{\bm{v}_1,\cdots,\bm{v}_n\}$ be the optimized solution for the obtained program, according to Theorem~\ref{theorem:graph:sdpSolution}, $opt$ can be
computed within polynomial time (line 10).
Finally, using the rounding techniques proposed in \cite{Krauthgamer09PartitioningGraphs} (line 11), the final partitioning solution can be obtained. 
In details, \cite{Krauthgamer09PartitioningGraphs} proposes a rounding method named $k$-\textsc{Partition},
which can output a partitioning solution based on $opt$ as follows.
First, $k$-\textsc{Partition} utilizes the space embedding techniques proposed by \cite{Arora09ExpanderFlows}
to map the optimized solution from $\ell_2^2$ to $\ell_2$, which will enlarge the solution with a ratio at most $O(\sqrt{\log n})$.
Then, using the random hyperplane rounding techniques (see \cite{Du11ApproximationAlgorithm}),
the partitions with size at most $(1+2\epsilon)\frac{n}{k}$ can be built and extracted one by one, until all nodes have been processed once.

Next, we will prove the performance guarantee of the approximation ratio of {\lxmAlgorithmWorkload}.
\begin{theorem}\label{theorem:graph:WkBGPLowerBound}
Given the input $G$, $k$ and $\Phi$ of the {\lxmProblemWordload} problem,
the cost of the optimal solution of the Equation~\eqref{eq:graph:WkBGPsdp:A} (or \eqref{eq:graph:WkBGPsdp:Improve})
is a lower bound of the partitioning cost of the optimal solution of {\lxmProblemWordload}.
\qed
\end{theorem}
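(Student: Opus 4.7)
The plan is to show the SDP in Equation~\eqref{eq:graph:WkBGPsdp:Improve} is a relaxation of {\lxmProblemWordload}: every integer partitioning solution can be embedded as a feasible vector solution of the SDP whose objective value equals the {\lxmProblemWordload} cost. Once this is established, the optimum of the SDP can be no larger than the {\lxmProblemWordload} optimum, yielding the claimed lower bound.

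Concretely, let $\textsc{PAR}^*=\{V_1^*,\ldots,V_k^*\}$ be any optimal $k$-balanced partitioning solution for the instance. I would embed each node $v\in V_G$ into $\mathbb{R}^k$ by the rule $\bm{v}=\bm{e}_i$ whenever $v\in V_i^*$, where $\bm{e}_1,\ldots,\bm{e}_k$ is the standard orthonormal basis. Under this embedding,
\begin{equation*}
\tfrac{1}{2}\|\bm{u}-\bm{v}\|_2^2 \;=\; \mathbf{I}\!\left(\textsc{PAR}^*(u)\neq \textsc{PAR}^*(v)\right),
\end{equation*}
so the SDP objective evaluates exactly to $\sum_{(v_i,v_j)\in E_G} W_{ij}\,\mathbf{I}(\textsc{PAR}^*(v_i)\neq\textsc{PAR}^*(v_j))$, which by the definition of $W_{ij}$ (and the rewriting in Equation~\eqref{eq:graph:WkBGPsdp:Improve}) coincides with $\textsf{cost}_{\Phi}(\textsc{PAR}^*)$.

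Next I would verify the two families of constraints. For the $\ell_2^2$ triangle inequality in Equation~\eqref{eq:graph:WkBGPsdp:B}, the pairwise squared distances take values only in $\{0,2\}$, and a case analysis over how many of $u,v,w$ share a common partition makes the inequality immediate. For the spreading constraint in Equation~\eqref{eq:graph:WkBGPsdp:C}, note that
\begin{equation*}
\sum_{v\in S}\tfrac{1}{2}\|\bm{u}-\bm{v}\|_2^2 \;=\; |S|-|S\cap V_{\textsc{PAR}^*(u)}^*| \;\geq\; |S|-\bigl\lceil n/k\bigr\rceil,
\end{equation*}
because $V_{\textsc{PAR}^*(u)}^*$ is bounded by $\lceil n/k\rceil$ under Definition~\ref{def:kBalancedPartitionA}. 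This gives the required lower bound of $|S|-n/k$ (up to the standard ceiling slack, which is absorbed by the definition of $k$-balanced partitions).

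The main obstacle is essentially the bookkeeping on the spreading constraint: one must confirm that the $\lceil n/k\rceil$-balance promised by the integer solution really does imply the slightly stronger-looking inequality written with $n/k$ on the right-hand side, and that no other constraint of the SDP is violated by the standard-basis embedding. Once these two verifications are in place, the embedded vectors form a feasible SDP solution whose value equals $\textsf{cost}_{\Phi}(\textsc{PAR}^*)$, so the optimum of the relaxed SDP is at most $\textsf{cost}_{\Phi}(\textsc{PAR}^*)$, which is the desired lower bound.
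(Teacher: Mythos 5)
Your proposal is correct and follows essentially the same route as the paper's own proof: embed the optimal partition $\textsc{PAR}^*$ via standard basis vectors, check the $\ell_2^2$ triangle inequality by the $\{0,2\}$ distance case analysis, check the spreading constraint by counting how many nodes of $S$ can share $u$'s partition, and observe that the objective then equals $\textsf{cost}_{\Phi}(\textsc{PAR}^*)$, so the SDP optimum is a lower bound. The ceiling slack you flag on the spreading constraint is glossed over in the paper as well (its proof simply assumes $|V_i|=n/k$), so your treatment matches the paper's level of rigor.
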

\begin{proof}
Let $\textsc{PAR}_G^*$ be the optimal partitioning solution of the given instance of {\lxmProblemWordload},
and let the corresponding partitioning cost be $\textsf{cost}_{\Phi}(\textsc{PAR}_G^*)$.
Then, assume that the optimal solution of the corresponding semidefinite program is
$\{\bm{u}_1,\bm{u}_2,\cdots,\bm{u}_n\}$ and the corresponding optimal value of the goal is $A^*$.
We can build a feasible solution $\{\bm{v}_1,\bm{v}_2,\cdots,\bm{v}_n\}$ of the program based on $\textsc{PAR}_G^*$,
whose details can be explained as follows.
Let $\textsc{PAR}_G^*$ be the corresponding $k$-balanced partitioning solution $\{V_1,V_2,\cdots,V_k\}$,
where for each $V_i$ we have $|V_i|=\frac{n}{k}$.
Then, for any node $v_i\in{V_j}$, let $v_i$ be the following vector.
	\begin{align}
		\bm{v}_i=[0,&\cdots,1,\cdots,0]^T\\
		&\hspace{.66em}\stackrel{\big\uparrow}{\text{the }j\text{th bit}}\notag
	\end{align}
	
Next, checking the constraints in Equation~\eqref{eq:graph:WkBGPsdp:A}, we can verify the correctness
of the solution obtained.

(1) For the constraints in Equation~\eqref{eq:graph:WkBGPsdp:B},
since the $\|\cdot\|_2^2$ distance between two nodes must be 0 or 2,
the only possible way to violate the constraints is that both $\|\bm{u}-\bm{v}\|_2^2$ and $\|\bm{v}-\bm{w}\|_2^2$ are 0 but $\|\bm{u}-\bm{w}\|_2^2$ is 2.
For this case, the nodes $u$ and $v$ are placed into a same partition, $u$ and $v$ belong to 
one partition, and $u$ and $w$ are placed into two different partitions.
Obviously, it is impossible that the above three statements are true at the same time.
Therefore, the constraints in Equation~\eqref{eq:graph:WkBGPsdp:B} must have been satisfied.

(2) For the constraints in Equation~\eqref{eq:graph:WkBGPsdp:C},
let us consider the set $S$ with a fixed size.
Obviously, the right side of the constraint is a fixed value.
For the left side, considering a given node $u\in{S}$,
the corresponding value of the left side depends on the value of $\frac{1}{2}\|\bm{v}-\bm{u}\|_2^2$.
Then, it is expected that $\frac{1}{2}\|\bm{v}-\bm{u}\|_2^2$ takes a rather small value.
Because $\frac{1}{2}\|\bm{v}-\bm{u}\|_2^2=0$ means that the nodes $u$ and $v$ belong to the same partition,
at most $\frac{n}{k}-1$ nodes can satisfy this condition.
Assume that all nodes satisfying the above condition are placed into $S$, then both sides of Equation~\eqref{eq:graph:WkBGPsdp:C} are 0 and
the constraint is satisfied.
If the size of $S$ increases further, the value of the right side will be increased by 1, at the same time,
the value of the left side will be increased at least by 1.
Therefore, the constraints can be satisfied still.

Then, let us consider the relationship of the optimal values between the semidefinite program and
the {\lxmProblemWordload} problem. For the feasible solution $\{\bm{v}_1,\bm{v}_2,\cdots,\bm{v}_n\}$ constructed, 
the corresponding value of the optimizing goal of Equation~\eqref{eq:graph:WkBGPsdp:A} is represented by $A$.
For any two nodes $v_{i}$ and $v_{j}$, it is not hard to verify that $\mathbf{I}(\textsc{PAR}^*_G(v_{i})\neq\textsc{PAR}^*_G(v_{j}))\equiv\frac{1}{2}\|\bm{v}_{i}-\bm{v}_{j}\|_2^2$,
that is they have the same value indeed.
According to Equation~\eqref{eq:graph:workloadCost} and \eqref{eq:graph:WkBGPsdp:A}, we have $\textsf{cost}_{\Phi}(\textsc{PAR}^*_G)=A$.
Because $A^*$ is the optimal solution of the semidefinite program, obviously, we have $\textsf{cost}_{\Phi}(\textsc{PAR}^*_G)\geq A^*$.
The proof is finished.
\qed
\end{proof}

Theorem~\ref{theorem:graph:WkBGPLowerBound} indicates that 
the optimal solution of Equation~\eqref{eq:graph:WkBGPsdp:A} is a lower bound for the
cost of the optimal partitioning solution.
Then, we need the upper bound.

\begin{lemma}[\cite{Krauthgamer09PartitioningGraphs}]\label{lemma:graph:BGPAlgAppro}
Based on the optimal solution of the semidefinite program, the rounding method $k$-\textsc{Partition}
will return a partitioning solution with approximation ratio $O(\sqrt{\log k\log n})$ in polynomial time,
and the size of each partition can be bounded by $\frac{2n}{k}$.
\qed
\end{lemma}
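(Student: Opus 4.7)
The plan is to treat $k$-\textsc{Partition} as the composition of a low-distortion embedding followed by an iterative peeling step, and to analyze the balance violation and the expected cut cost separately. The vectors $\{\bm{v}_i\}$ returned by the SDP form a negative-type metric by virtue of the $\ell_2^2$ triangle inequality \eqref{eq:graph:WkBGPsdp:B}, so they fall within the scope of the Arora--Rao--Vazirani structure theorem, which produces an embedding into $\ell_2$ with distortion $O(\sqrt{\log n})$. I would first apply this embedding to obtain Euclidean vectors $\tilde{\bm{v}}_i$ whose pairwise Euclidean distances are bounded above by $O(\sqrt{\log n})$ times the SDP distances $\tfrac{1}{2}\|\bm{v}_i-\bm{v}_j\|_2^2$.

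Next I would analyze the peeling loop. Starting with $R=V$, one repeatedly draws a random Gaussian direction $\bm{g}$ and a threshold, peels off a ball $V_i \subseteq R$ whose size lies in the window $[\,n/k,\,(1+2\epsilon)n/k\,]$, updates $R \leftarrow R\setminus V_i$, and iterates until fewer than $n/k$ vertices remain. The spreading constraint \eqref{eq:graph:WkBGPsdp:C} is exactly the ingredient that rules out pathological clusters around any pivot, so a valid threshold exists with constant probability at each attempt; standard concentration arguments then guarantee that the output consists of at most $k$ pieces each of size $(1+2\epsilon)\tfrac{n}{k}\leq\tfrac{2n}{k}$ for a suitable $\epsilon$.

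To bound the cost, I would charge each cut edge to its SDP distance. By the Goemans--Williamson hyperplane analysis, an edge $(u,v)$ is separated within a given round with probability $O(\|\tilde{\bm{v}}_u-\tilde{\bm{v}}_v\|)$, which after composing with the ARV distortion is at most $O(\sqrt{\log n})\cdot\tfrac{1}{2}\|\bm{v}_u-\bm{v}_v\|_2^2$. A naive union bound over the $k$ peeling rounds would lose a factor of $k$, but the amortized argument of Krauthgamer--Naor--Schwartz exploits the facts that each edge need be cut at most once and that the residual set shrinks geometrically, yielding an extra $\sqrt{\log k}$ rather than $k$. Summed over edges this produces the claimed $O(\sqrt{\log k\,\log n})$ ratio against the SDP lower bound established by Theorem~\ref{theorem:graph:WkBGPLowerBound}.

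The main obstacle is controlling balance and cost simultaneously: aggressive peeling produces balanced pieces but may sever too many edges, while conservative peeling fails to enforce the $\tfrac{2n}{k}$ bound. The spreading constraint is the crucial leverage for finding a correctly sized piece in a single round, and extracting only a $\sqrt{\log k}$ (rather than $\log k$) factor across the $k$ rounds is the most delicate step, relying on the amortized conditioning analysis in \cite{Krauthgamer09PartitioningGraphs}. The polynomial running time is immediate from the facts that ARV, Gaussian sampling and hyperplane rounding are each implementable in polynomial time.
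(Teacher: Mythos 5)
The paper offers no proof of this lemma at all --- it is imported verbatim from \cite{Krauthgamer09PartitioningGraphs} --- and your outline (embedding the $\ell_2^2$ SDP solution into $\ell_2$ with $O(\sqrt{\log n})$ distortion via the ARV structure theorem, then iteratively peeling off pieces of size at most $(1+2\epsilon)\frac{n}{k}$ using random Gaussian directions, with balance enforced by the spreading constraints) is exactly the mechanism the paper itself sketches when introducing $k$-\textsc{Partition}, so you are on the same route and your deferral of the delicate step to the cited source is legitimate here. One small correction of attribution: in the Krauthgamer--Naor--Schwartz analysis the $\sqrt{\log k}$ factor arises already within a single round, from the Gaussian threshold $\alpha_{Ck}\approx\sqrt{2\log(Ck)}$ entering the separation-probability bound (compare the paper's Lemma~\ref{lemma:graph:preRatio}), rather than from geometric shrinkage of the residual set across the $k$ peeling rounds.
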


Then, we have the following result.
\begin{theorem}\label{theorem:graph:WkBGPAppro}
The expected time cost of  {\lxmAlgorithmWorkload} shown in Algorithm~\ref{alg:graph:WkBGPAlg}
is polynomial with the input size, and {\lxmAlgorithmWorkload} is a bi-criteria $(k,2)$-approximation algorithm
with ratio bounded by $O(\sqrt{\log k\log n})$.\qed
\end{theorem}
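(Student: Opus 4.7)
The plan is to assemble the two ingredients that the preceding theorem and lemma already supply: the lower bound of Theorem~\ref{theorem:graph:WkBGPLowerBound}, which says the SDP optimum underestimates the true WkBGP optimum, and the rounding guarantee of Lemma~\ref{lemma:graph:BGPAlgAppro}, which says the $k$-\textsc{Partition} procedure converts the vector solution into a feasible partition at a controlled blow-up in both cost and size.

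First I would verify the polynomial running time. Lines 2--8 touch $O(|\Phi|)$ entries and update the $|V_G|\times|V_G|$ array $W$, so preprocessing is polynomial in the input. The SDP of Equation~\eqref{eq:graph:WkBGPsdp:Improve} is built and solved in polynomial time by Theorem~\ref{theorem:graph:sdpSolution}, where the exponentially many spreading constraints in Equation~\eqref{eq:graph:WkBGPsdp:C} are enforced lazily through the sort-and-check scheme described just before that theorem. Finally, the $k$-\textsc{Partition} rounding of \cite{Krauthgamer09PartitioningGraphs} runs in polynomial expected time. Summing all three stages gives an overall polynomial bound in $|V_G|+|\Phi|$.

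Next I would chain the inequalities. Let $\textsc{PAR}_G^*$ be an optimal WkBGP partition and let $A^*$ denote the optimum of Equation~\eqref{eq:graph:WkBGPsdp:Improve}. Theorem~\ref{theorem:graph:WkBGPLowerBound} yields $A^*\leq \textsf{cost}_{\Phi}(\textsc{PAR}_G^*)$. Applying Lemma~\ref{lemma:graph:BGPAlgAppro} to the vector solution $opt$ produced in line 10, the output partition $\textsc{PAR}_G=\{V_1,\ldots,V_k\}$ satisfies $|V_i|\leq \tfrac{2n}{k}$ for every $i$, which is the $(k,2)$-balance condition, and $\textsf{cost}_{\Phi}(\textsc{PAR}_G)\leq O(\sqrt{\log k\log n})\cdot A^*$. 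Combining the two inequalities gives $\textsf{cost}_{\Phi}(\textsc{PAR}_G)\leq O(\sqrt{\log k\log n})\cdot\textsf{cost}_{\Phi}(\textsc{PAR}_G^*)$, which is the claimed bi-criteria guarantee.

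The one point that requires a small argument, and which I expect to be the main obstacle, is that Lemma~\ref{lemma:graph:BGPAlgAppro} is stated in \cite{Krauthgamer09PartitioningGraphs} for the uniform-weight balanced partition SDP, whose objective is $\sum_{(v_i,v_j)\in E_G}\tfrac{1}{2}\|\bm{v}_i-\bm{v}_j\|_2^2$, whereas our objective weights each edge by $W_{ij}\geq 0$. Since the rounding of $k$-\textsc{Partition} only uses the triangle inequality \eqref{eq:graph:WkBGPsdp:B} and the spreading constraints \eqref{eq:graph:WkBGPsdp:C} and bounds the expected cut of every edge in isolation by $O(\sqrt{\log k\log n})\cdot\tfrac{1}{2}\|\bm{v}_i-\bm{v}_j\|_2^2$, taking a nonnegative linear combination with weights $W_{ij}$ preserves the distortion factor verbatim. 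Once this observation is in hand, the theorem follows immediately by combining the running-time bound with the two inequalities above.
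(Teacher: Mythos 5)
Your proposal is correct and takes essentially the same route as the paper, whose proof of this theorem is a one-line appeal to Lemma~\ref{lemma:graph:BGPAlgAppro}, Theorem~\ref{theorem:graph:WkBGPLowerBound} and the algorithm's structure. Your explicit treatment of the weighted objective---observing that the $k$-\textsc{Partition} rounding bounds each edge's separation probability individually, so a nonnegative combination with the weights $W_{ij}$ preserves the $O(\sqrt{\log k\log n})$ factor---supplies a detail the paper leaves implicit, but it is the same argument.
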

\begin{proof}
According to Lemma~\ref{lemma:graph:BGPAlgAppro}, Theorem~\ref{theorem:graph:WkBGPLowerBound} and the details
of {\lxmAlgorithmWorkload}, it is easy to verify the correctness of the theorem.
\qed
\end{proof}

\section{Theoretical Analysis of Motif Based Balanced Graph Partitioning Problem}
\label{sec:graph:mkbgptheory}

\subsection{Analysis of Computational Complexities}
In this part, the computational complexities of the {\lxmProblemMotif} problem is analyzed and 
we have the following result.

\begin{theorem}\label{theorem:graph:MkBGPNPC}
The {\lxmProblemMotif} problem is $\lxmNP$-complete, even if $k=2$.\qed
\end{theorem}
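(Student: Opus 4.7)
The plan is to verify membership in $\lxmNP$ and then prove hardness by polynomial reduction from the classical 2BGP problem, whose $\lxmNP$-completeness is given by Theorem~\ref{theorem:graph:kBGPNPC}. Membership is immediate: given a candidate 2-balanced partition $\textsc{PAR}_{G}$, we enumerate the $O(|V_G|^3)$ triangles of $G$ and, for each one, check in constant time whether its three vertices all land in the same part, then sum $\textsf{cost}_{M}(\cdot)$ from Equation~\eqref{eq:graph:motifCost}. The entire verification runs in polynomial time.

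For the hardness direction, I would take an arbitrary instance $(G,k^{*})$ of 2BGP, and for technical convenience assume $G$ is triangle-free and $d$-regular on $n$ (even) vertices, which remains $\lxmNP$-hard. I construct an M2BGP instance $(G',T^{*})$ on the triangle motif using two gadgets. First, for every $v\in V_{G}$, I attach a \emph{rigidifying clique}: introduce $M$ fresh vertices $C_{v}$ and every possible edge inside $C_{v}\cup\{v\}$, turning $C_{v}\cup\{v\}$ into an $(M+1)$-clique carrying $\binom{M+1}{3}$ triangles. Second, for every edge $e=(u,v)\in E_{G}$, I introduce one \emph{witness} vertex $w_{e}$ joined to $u$ and to $v$, creating exactly one new triangle $(u,v,w_{e})$. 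Because $G$ is triangle-free and because $w_{e}$ is adjacent to only two vertices, a direct case analysis on neighbourhoods shows that the triangles of $G'$ decompose into two disjoint families: those inside a single clique $C_{v}\cup\{v\}$ and those of the form $(u,v,w_{e})$. I choose $M$ large enough that $\binom{M}{2}>|E_{G}|$, which keeps the construction polynomial since $M=\Theta(\sqrt{|E_{G}|})$ suffices, and I set $T^{*}=k^{*}$.

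For the forward direction, from a balanced bisection $(A,V_{G}\setminus A)$ of $G$ with cut $c\le k^{*}$ I lift to a 2-balanced partition of $G'$ by placing each clique $C_{v}\cup\{v\}$ entirely on the side where $v$ lies, placing the witness $w_{e}$ on the common side of its endpoints for every uncut edge, and distributing the witnesses of cut edges to complete global balance. Because $G$ is $d$-regular, a degree-count gives $e_{1}=e_{2}$ for any balanced bisection, making the witness arithmetic consistent, and the only destroyed triangles are the $c$ ones at cut edges, so the M2BGP cost is $c\le k^{*}$. For the reverse direction, suppose a 2-balanced partition of $G'$ has cost at most $k^{*}$. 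Splitting any clique $C_{v}\cup\{v\}$ into nonempty pieces of sizes $m$ and $M+1-m$ destroys at least $\binom{M+1}{3}-\binom{m}{3}-\binom{M+1-m}{3}\ge\binom{M}{2}>k^{*}$ triangles, so no clique may be split. Hence every cluster contributes its full $(M+1)$ vertices to a single side, which, combined with the size-balance condition on $V_{G'}$, forces exactly $n/2$ clusters per side and therefore a balanced bisection $(A,V_{G}\setminus A)$ of $G$. Each edge of $G$ cut by $(A,V_{G}\setminus A)$ destroys at least its witness-triangle, so the cut is at most $k^{*}$.

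The main obstacle will be the second direction: certifying that no clique can be sliced without paying more than $k^{*}$ and that global balance of $V_{G'}$ cleanly translates into balance on $V_{G}$. Choosing $M$ with $\binom{M}{2}>k^{*}$ handles the first part, and the cluster-atomicity pins down the second. A secondary, routine difficulty is parity bookkeeping for the witness distribution when $c$ or $|E_{G}|$ is odd; this can be absorbed by adding at most one dummy isolated vertex or by attaching two witnesses per edge instead of one, neither of which affects the correctness or polynomiality of the reduction. The conclusion $(G,k^{*})\in\text{2BGP}\iff(G',k^{*})\in\text{M2BGP}$ then yields the claimed $\lxmNP$-completeness, even for $k=2$ and the triangle motif.
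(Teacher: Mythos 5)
Your reduction has a genuine gap in the reverse direction, and it is caused by your choice of $M$. You add one witness vertex per edge, so $|V_{G'}|=n(M+1)+|E_G|$, and you take $M=\Theta(\sqrt{|E_G|})$ (just enough for $\binom{M}{2}>|E_G|$). With cliques that small, clique-atomicity plus the balance condition on $V_{G'}$ does \emph{not} force $n/2$ cliques per side: a side may carry $n/2+t$ whole cliques provided $t(M+1)\le |E_G|/2+O(1)$, with the witnesses pushed to the other side, so $t$ can be as large as roughly $\sqrt{|E_G|}/2$. Hence the sentence ``combined with the size-balance condition on $V_{G'}$, forces exactly $n/2$ clusters per side'' is false as stated, the induced cut of $G$ need not be balanced, and the equivalence $(G,k^{*})\in\text{2BGP}\iff(G',k^{*})\in\text{M2BGP}$ is not established (your proof also does not account for the witness triangles of \emph{uncut} edges that get destroyed when their witnesses are displaced for balance, which is the only mechanism that could penalize such skewed solutions). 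The repair is cheap: take $M$ polynomially larger, e.g.\ $M\ge |E_G|$, so that even one extra clique exceeds the slack the $|E_G|$ witnesses can absorb; atomicity then follows anyway because WLOG $k^{*}\le |E_G|<\binom{M}{2}$.

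A second, unsupported step is the starting point: you reduce from 2BGP restricted to triangle-free, $d$-regular graphs with $n$ even and simply assert this ``remains NP-hard.'' Both restrictions are load-bearing in your argument --- triangle-freeness is what makes the triangles of $G'$ decompose into clique triangles and witness triangles, and regularity is what gives $e_1=e_2$ so the witnesses can be balanced in the forward direction --- so this needs a proof or a citation, and it is not an off-the-shelf fact in the paper's reference list. The paper's reduction sidesteps both issues: it does not keep the original edges at all; it builds one clique $C_i$ of size $X\ge\max\{2\,\textsf{deg}(G),\,|V_G|+1\}$ per vertex and encodes each edge $(v_i,v_j)$ by two new edges joining two unused nodes of $C_i$ to one unused node of $C_j$, creating exactly one triangle per original edge. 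Since no vertices outside the cliques exist, whole-clique placement plus balance immediately yields $n/2$ cliques per side, and no triangle-freeness or regularity assumption is needed. Finally, note the theorem concerns a general constant-size motif, so the membership argument should be phrased for arbitrary $c$-motifs (checking the $O(|V|^{c})$ candidate occurrences), not only for triangles; this is minor but worth stating.
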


Obviously, the proof of Theorem~\ref{theorem:graph:MkBGPNPC} should consider some special motif structures,
that is the parameter $M$ in Definition~\ref{def:graph:MkBGP} needs to be specified.
When the motif $M$ is composed of at most 2 nodes, the {\lxmProblemMotif} problem will be trivial or equivalent
to {\lxmProblemPrevious}, therefore, we use the motif with 3 nodes to analyze the complexities of {\lxmProblemMotif}.
Assuming that the motif $M$ is a simple triangle, we can prove Theorem~\ref{theorem:graph:MkBGPNPCTriangle},
which will imply the lower bound result in Theorem~\ref{theorem:graph:MkBGPNPC}, that is the {\lxmProblemMotif}
problem is $\lxmNP$-hard.

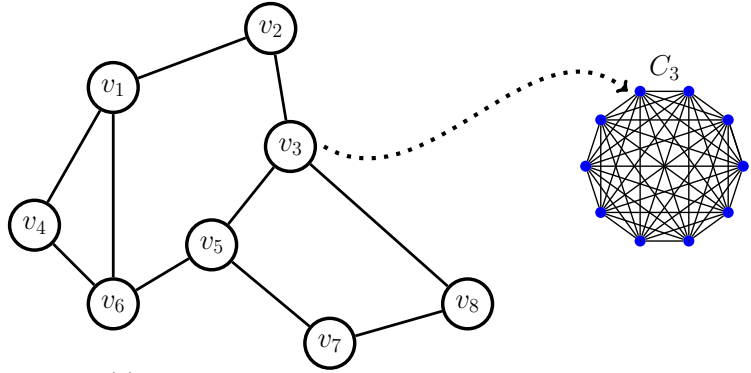
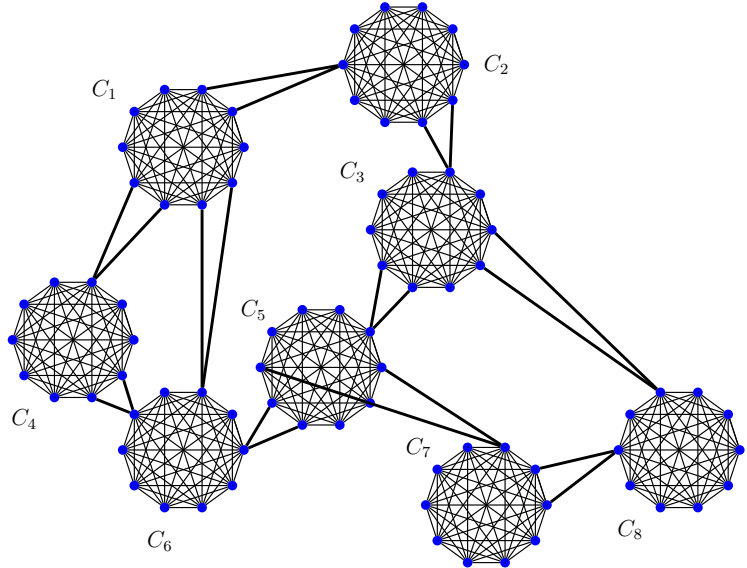
\begin{figure}[t]
	\centering
	\subfigure[An instance $G$ of the graph bisection problem]{
		\label{fig:graph:MkBGPReduction:A}
		\begin{minipage}{.6\textwidth}\centering
			\resizebox{\textwidth}{!}{ %
				\begin{tikzpicture}[ auto,swap]
					\foreach \pos/\name/\vertexID in {{(0,2.5)/a/1}, {(4,4)/b/2}, {(4.5,1)/c/3},
						{(-2,-1)/d/4}, {(2.5,-1.5)/e/5}, {(0,-3)/f/6}, {(5.5,-4)/g/7}, {(9,-3)/h/8}}
					\node[vertex] (\name) at \pos {$v_{\vertexID}$};
					\foreach \source/ \dest /\weight in {b/a/v1,d/a/v3,
						b/c/v5, e/c/v6,a/f/v7,
						f/d/v8,f/e/v9,
						g/e/v10,g/h/v13,c/h/17}
					\path[edge] (\source) -- node[pos=0.5,minimum height=10pt,minimum width=0pt](\weight) {} (\dest);
					\foreach \nodei/\sangle in {1/0,2/36,3/72,4/108,5/144,6/180,7/216,8/252,9/288,10/324}
					\node[fill=blue,circle,minimum size=8pt,inner sep=0pt,outer sep=0pt] (c\nodei) at ($(14 
					,.5)!1!\sangle:(16,.5)$) {};
					\node at (14,3) {\huge$C_3$};
					\foreach \i in {1,2,3,...,10}
					\foreach \j in {1,2,...,\i}
					\path[edge,line width=1pt] (c\i)--(c\j);
					\draw[line width=3pt,loosely dashed,->] ([xshift=5] c.east) to[out=-30,in=140] ([xshift=-5]c4.west);
			\end{tikzpicture}}
	\end{minipage}}
	\\
	\subfigure[An instance $H$ of {\lxmProblemMotifTriangle}]{
		\label{fig:graph:MkBGPReduction:B}
		\begin{minipage}{0.6\textwidth}\centering
			\resizebox{\textwidth}{!}{ %
				\begin{tikzpicture}[ auto,swap]
					\foreach \posx/\posy/\name/\vertexID in {{0/2.5/a/1}, {4/4/b/2}, {4.5/1/c/3}, {-2/-1/d/4}, {2.5/-1.5/e/5}, {0/-3/f/6}, {5.5/-4/g/7}, {9/-3/h/8}}{
						\foreach \nodei/\sangle in {1/0,2/36,3/72,4/108,5/144,6/180,7/216,8/252,9/288,10/324}
						\node[fill=blue,circle,minimum size=5pt,inner sep=0pt,outer sep=0pt] (\name\nodei) at 
						($(\posx ,\posy)!1!\sangle:(\posx+1.1,\posy)$) {};
						%
						\foreach \i in {1,2,3,...,10}
						\foreach \j in {1,2,...,\i}
						\path[edge,line width=.6pt] (\name\i)--(\name\j);
					}
					\foreach \source/ \dest /\weight in {d3/a7/e11,d3/a8/e12,
						b6/a2/e21,b6/a3/e22,
						f3/a10/e31,f3/a9/e32,
						c3/b9/e41,c3/b10/e42,
						h4/c1/e51,h4/c10/e52,
						e2/c7/e61,e2/c8/e62,
						f5/d9/e71,f5/d10/e72,
						f1/e7/e81,f1/e8/e82,
						g3/e1/e81,g3/e6/e82,
						h6/g1/e81,h6/g2/e82}
					\path[edge,line width=1.5pt] (\source) -- node[pos=0.5,minimum height=10pt,minimum width=0pt](\weight) {} (\dest);
					\node at ([xshift=-1cm]a4.west) {\large $C_1$};
					\node at ([xshift=-1cm]c4.west) {\large $C_3$};
					\node at ([xshift=-.8cm]g4.west) {\large $C_7$};
					\node at ([yshift=-2.7cm]f4.west) {\large $C_6$};
					\node at ([xshift=-.8cm]e4.west) {\large $C_5$};
					\node at ([xshift=.5cm]b1.east) {\large $C_2$};
					\node at ([yshift=-2cm]h5.south) {\large $C_8$};
					\node at ([yshift=-2cm]d5.south) {\large $C_4$};
			\end{tikzpicture}}
	\end{minipage}}
\caption{An Example of $\textsc{P}$-time reduction used in the proof of Theorem~\ref{theorem:graph:MkBGPNPCTriangle}}
	\label{fig:graph:MkBGPReduction}
\end{figure}

\begin{theorem}\label{theorem:graph:MkBGPNPCTriangle}
Even if $k=2$ and the motif $M$ is a triangle, the {\lxmProblemMotif} problem is still $\lxmNP$-complete.\qed
\end{theorem}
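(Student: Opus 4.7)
The plan is to establish NP-completeness by a polynomial-time reduction from the graph bisection problem, which is the case $k=2$ of {\lxmProblemPrevious} and is $\lxmNP$-complete by Theorem~\ref{theorem:graph:kBGPNPC}. Membership in $\lxmNP$ is immediate: a partition $\textsc{PAR}_H$ is a polynomial-size certificate and $\textsf{cost}_M(\textsc{PAR}_H)$ is polynomial-time computable by enumerating triangles.

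For the hardness direction, given a bisection instance $G$ (assume $|V_G|$ even; the odd case is handled by adding a single isolated dummy vertex), I would construct the {\lxmProblemMotifTriangleTwoPar} instance $H$ as follows. Fix an even integer $N$ large enough that $\binom{N-1}{2} > 2|E_G|$; note $N=\Theta(\sqrt{|E_G|})$ suffices. Replace each $v_i \in V_G$ by a fresh clique $C_i \cong K_N$, and for each edge $(v_i, v_j) \in E_G$ pick a vertex $x \in C_i$ and an edge $(y, z)$ inside $C_j$, adding cross-clique edges $(x, y)$ and $(x, z)$; together with the existing edge $(y, z)$ this produces exactly one new triangle $\{x, y, z\}$. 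The construction runs in polynomial time, $|V_H|=N|V_G|$ is even, and the gadget is exactly the one illustrated in Fig.~\ref{fig:graph:MkBGPReduction}.

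The heart of the reduction is a structural lemma: in every optimal $2$-balanced partition $\textsc{PAR}^*_H$ of $H$, every clique $C_i$ lies entirely on one side. I would prove this by comparing $\textsc{PAR}^*_H$ against a clique-rounded competitor $\textsc{PAR}'$ obtained by placing each $C_i$ on its own majority side in $\textsc{PAR}^*_H$ and then, if $\textsc{PAR}'$ is off-balance, flipping the minimum number of still-whole cliques to restore exact balance. If $s$ cliques are split in $\textsc{PAR}^*_H$, the intra-clique destruction alone contributes at least $\binom{N}{3}-\binom{a_i}{3}-\binom{N-a_i}{3} \ge \binom{N-1}{2}$ per split clique, hence at least $s\binom{N-1}{2}$ in total. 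On the other side, $\textsc{PAR}'$ destroys at most the $|E_G|$ cross-clique triangles present in $H$ plus at most $(s/2)\cdot |E_G|$ triangles from the rebalancing flips, since the rebalancing magnitude is bounded by the total split mass $sN/2$, i.e.\ at most $s/2$ whole-clique flips. With $\binom{N-1}{2}>2|E_G|$, any $s\ge 1$ forces $\textsf{cost}_M(\textsc{PAR}^*_H) > \textsf{cost}_M(\textsc{PAR}')$, contradicting optimality; hence $s=0$.

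Once every clique is uncut, a $2$-balanced partition of $H$ must place exactly $|V_G|/2$ cliques on each side, inducing a bisection $\textsc{PAR}_G$ of $G$, and any bisection of $G$ lifts trivially to such a partition of $H$. Because each edge gadget contributes exactly one triangle to $H$ and that triangle is destroyed iff its two endpoint cliques lie on opposite sides, $\textsf{cost}_M(\textsc{PAR}_H)=\textsf{cost}(\textsc{PAR}_G)$ on every clique-preserving partition. Therefore $G$ admits a bisection with cut at most $B$ iff $H$ admits a $2$-balanced partition with triangle-cost at most $B$, completing the reduction. The main obstacle is the swap-and-bound accounting inside the structural lemma: the rebalancing flips must be charged carefully without double-counting against the intra-clique savings, and this careful charging is what dictates the quadratic choice $\binom{N-1}{2}>2|E_G|$ and hence $N=\Theta(\sqrt{|E_G|})$.
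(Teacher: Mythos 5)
Your overall route is the same as the paper's: reduce graph bisection to {\lxmProblemMotifTriangleTwoPar} by blowing each vertex of $G$ up into a clique and attaching one triangle per edge of $G$, and your exchange/rebalancing lemma is a workable (if more elaborate) substitute for the paper's simpler observation that splitting even one clique already destroys at least $\binom{X-1}{2}$ triangles, which exceeds any relevant budget $C<\binom{|V_G|}{2}$. The genuine gap is in the gadget bookkeeping. You never require the gadget vertices to be distinct across different edges of $G$, and with your choice $N=\Theta(\sqrt{|E_G|})$ you in fact cannot require it: a vertex $v_i$ of degree close to $|V_G|$ (e.g.\ a dominating vertex) forces the $\textsf{deg}(v_i)$ gadgets incident to $C_i$ to reuse vertices of $C_i$. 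Once vertices are reused, the claim ``this produces exactly one new triangle'' is no longer justified: cross edges coming from the gadgets of three mutually adjacent vertices $v_i,v_j,v_l$ of $G$ can close an unintended triangle with one vertex in each of $C_i,C_j,C_l$ (take the single vertex of one gadget to coincide with a pair vertex of another). These accidental triangles break exactly the two places where you use the count: the identity $\textsf{cost}_{\Delta}(\textsc{PAR}_H)=\textsf{cost}(\textsc{PAR}_G)$ on clique-preserving partitions (and hence the forward direction of the equivalence, since a bisection of cut at most $B$ may lift to a partition destroying more than $B$ triangles), and the bound of at most $|E_G|$ cross-clique triangles inside your structural lemma.

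The paper closes this hole explicitly: it demands that every node of $H$ be used by at most one edge gadget and chooses the clique size $X\geq 2\,\textsf{deg}(G)$ (and $X\geq|V_G|+1$) precisely so that fresh vertices are always available. Your proof becomes correct if you either adopt the same discipline, taking $N$ at least $\max\{2\,\textsf{deg}(G)+2,\,N_0\}$ where $N_0$ is the smallest even integer with $\binom{N_0-1}{2}>2|E_G|$ (still polynomial, and your exchange lemma then goes through verbatim), or keep $N=\Theta(\sqrt{|E_G|})$ but add an argument that accidental triangles cannot arise, e.g.\ by fixing in each clique disjoint sets of ``single-role'' and ``pair-role'' vertices so that every cross edge joins a single-role to a pair-role vertex, whence no triangle can consist of three cross edges. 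As written, though, the construction is underspecified and the central counting claim on which both directions of your reduction rely can fail.
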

\begin{proof}
We use {\lxmProblemMotifTriangleTwoPar} to represent the special case of {\lxmProblemMotif}
when $k=2$ and the input motif $M$ is a triangle.
Then, the proof can be divided into the following two parts.
(1) The upper bound: {\lxmProblemMotifTriangleTwoPar} is in $\lxmNP$.
(2) The lower bound: {\lxmProblemMotifTriangleTwoPar} is $\lxmNP$-hard.

The proof will utilize the following decision version of {\lxmProblemMotifTriangleTwoPar}.
Given a graph $G$ and an integer $C$, the problem is to determine whether there is a 2-balanced partitioning solution $\textsc{PAR}$
for $G$ such that $\textsf{cost}_{\Delta}(\textsc{PAR})\leq C$,
where $\textsf{cost}_{\Delta}(\cdot)$ is the variant of $\textsf{cost}_{M}(\cdot)$ when $M$ is a triangle.
When the context is clear, we still use {\lxmProblemMotifTriangleTwoPar} to represent the decision version.

For the upper bound, a $\lxmNP$ algorithms for {\lxmProblemMotifTriangleTwoPar} will be introduced to show
that  {\lxmProblemMotifTriangleTwoPar} is in $\lxmNP$.
The algorithm can be explained as follows.
(1) Guess a node set $S\subseteq{V}$ of size $|V|/2$, and initialize a counter \textsf{count} using 0 which is used to compute the number of triangles destroyed by the partitioning solution.
(2) Enumerate all triples $(u,v,w)$ of the node set $V$, if the corresponding three edges $(u,v)$, $(u,w)$ and $(v,w)$ belong to $E$ and 
the three nodes are not all in $S$ or $V\setminus S$, the counter \textsf{count} will be increased by 1.
(3) After all triples have been processed as above, if the counter satisfies $\textsf{count}\leq C$, the answer `yes' will be returned, otherwise, the answer `no' will be returned.

The correctness of the above algorithm can be verified easily, and its time cost can be bounded by $O(|V|^3)$ obviously.
Then, it has been proved that the {\lxmProblemMotifTriangleTwoPar} problem belongs to $\lxmNP$.

For the lower bound, we given a polynomial time reduction form the graph bisection problem to {\lxmProblemMotifTriangleTwoPar}.
Also, the decision version is used, that is, given a graph $G$ and a positive integer $B$, to determine whether there is a 2-balanced partitioning solution
such that the partitioning cost is not larger than $B$.
Then, given an instance $(G,B)$ of the graph bisection problem, the reduction will compute an instance $(H,C)$ of {\lxmProblemMotifTriangleTwoPar}.
The details of the reduction can be explained as follows.
(1) First, an even integer $X$ is computed, such that $X$ satisfies $X\geq 2\cdot\textsf{deg}(G)$ and $X\geq |V_G|+1$
where $\textsf{deg}(G)$ is the maximum degree of the graph $G$.
(2) Then, as shown in Fig.~\ref{fig:graph:MkBGPReduction:A}, for each node $v_i\in{G}$, a complete graph (or clique) $C_i$ is built such that
$C_i$ includes $X$ nodes and there is an edge between any two nodes. Let $H$ be the graph composed of all $C_i$.
(3) Next, several triangles are added into $H$, and the whole procedure will guarantee that each node will be utilized at most once.
For each edge $(v_i,v_j)\in{G}$, assuming that $i<j$, randomly choose two unused nodes $c_{i1}$ and $c_{i2}$ in $C_i$ and one unused
node $c_{j}$ in $C_j$, and add two  edges $(c_{i1},c_{j})$ and $(c_{i2},c_{j})$ to $H$.
As shown in Fig.~\ref{fig:graph:MkBGPReduction}, assume that the input graph $G$ for the bisection problem is shown in Fig.~\ref{fig:graph:MkBGPReduction:A},
the result graph $H$ constructed by the reduction for the {\lxmProblemMotifTriangleTwoPar} problem is shown in Fig.~\ref{fig:graph:MkBGPReduction:B}.
(4) Finally, let the parameter $C$ in {\lxmProblemMotifTriangleTwoPar} be equal to the parameter $B$ in the bisection problem.

Obviously, the above reduction can be finished within $O(|E_G|\cdot X+|V_G|\cdot X^2)$ time,
which can be guaranteed by setting $X$ to be the smallest even number satisfying the conditions, that is $X=O(poly(|G|))$.
Then, the correctness of the reduction can be verified by the following two aspects.

{\ding{239}} If there is a 2-balanced partitioning solution $\textsc{PAR}_G=\{S_G,T_G\}$ for $G$, such that
the corresponding cost is not larger than $B$, we can construct a 2-balanced partitioning solution  $\textsc{PAR}_H=\{S_H,T_H\}$
for $H$ satisfying $\textsf{cost}_{\Delta}(\textsc{PAR}_H)\leq C=B$.
For each node $v_i\in{S_G}$, the construction will add all nodes related with $C_i$ in $H$ to $S_H$,
and for each node $v_j\in{T_G}$, the nodes of $C_j$ in $H$ will be added into $T_H$.
It is not hard to verify that the partitioning solution defined by $\textsc{PAR}_H$ will not destroy
any triangle in $C_i$ but only may destroy the triangles built in the third step of the reduction.
For that step, the nodes utilized are will not be reused during the construction,
so the obtained triangles will not connect with each other.
Meanwhile, if some edge $(v_i,v_j)$ of $G$ is cut by $\textsc{PAR}_G$, then the corresponding triangle in $H$ must
be destroyed also by $\textsc{PAR}_H$, and vice versa.
Therefore, we have $\textsf{cost}_{\Delta}(\textsc{PAR}_H)\leq C=B$.
	
{\reflectbox{\ding{239}}}
If there is a 2-balanced partitioning solution $\textsc{PAR}_H=\{S_H,T_H\}$ for $H$ and
the cost is no larger than $C$, then a  2-balanced partitioning solution $\textsc{PAR}_G=\{S_G,T_G\}$ for $G$ satisfying $\textsf{cost}(\textsc{PAR}_G)\leq B=C$ can be constructed.
First, if $C\geq\frac{|V_G|(|V_G|-1)}{2}$, because $|E_G|\leq \frac{|V_G|(|V_G|-1)}{2}$, then $G$ has a trivial 2-balanced partitioning solution since
any solution will not cause a cost larger than $C$.
Therefore, only the case when $C<\frac{|V_G|(|V_G|-1)}{2}$ is considered in the followings.
	
	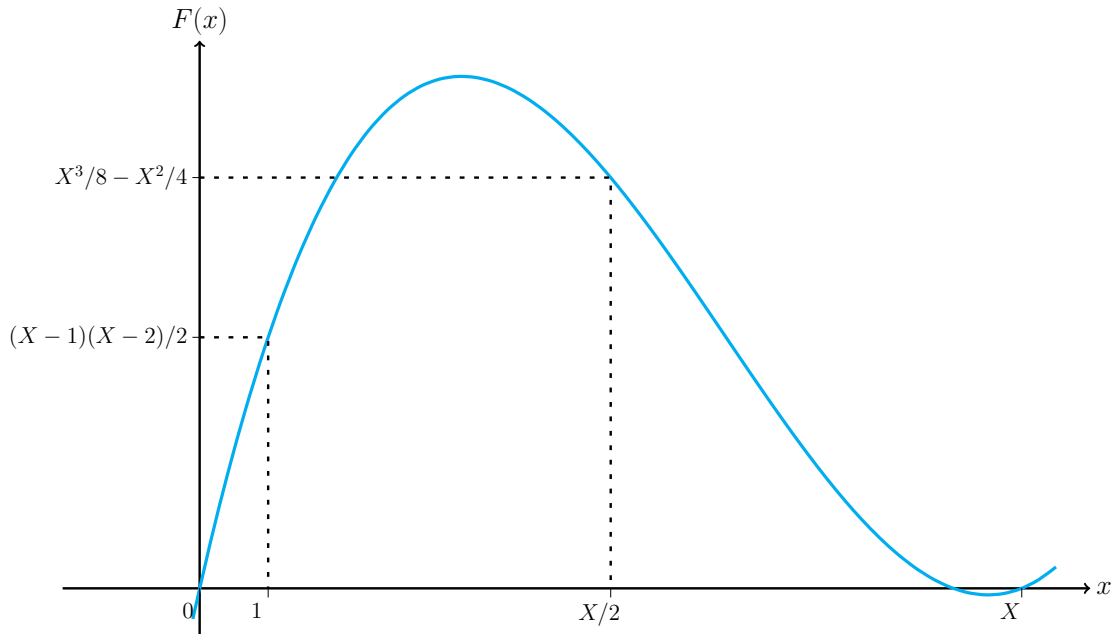
\begin{figure}[ht]
		\centering
		\begin{minipage}{.9\textwidth}\centering
			\resizebox{\textwidth}{!}{ %
				\begin{tikzpicture}[auto,swap,xscale=1.5,yscale=.1]
					\tikzset{elegant/.style={smooth,thick,samples=50,magenta}}
					\draw[->,line width=1.5pt] (-2,0) -- (13,0) node[right] {\LARGE$x$};
					\draw[->,line width=1.5pt] (0,-10) -- (0,120) node[above] {\LARGE$F(x)$};
					\draw [elegant,cyan,domain=-.1:12.5,line width=2pt] plot(\x,{\x*(12-\x)*(11-\x)/2});
					\foreach \x/\xtext in {0/0,1/1,12/X,6/{X/2}}
					\draw (\x,0.1cm) -- (\x,-2cm) node[anchor=north,xshift=-0.25cm] {\Large$\xtext$};
					\foreach \y/\ytext in {55/{(X-1)(X-2)/2},90/{X^3/8-X^2/4}}
					\draw (1pt,\y cm) -- (-3pt,\y cm) node[anchor=east] {\Large$\ytext$};
					\draw[loosely dashed,line width=1.5pt] (0,55) -- (1,55) -- (1,0);
					\draw[loosely dashed,line width=1.5pt] (0,90) -- (6,90) -- (6,0);
			\end{tikzpicture}}
		\end{minipage}
		\caption{An Illustration Example of $F(x)$ used in the proof of Theorem~\ref{theorem:graph:MkBGPNPCTriangle}}
		\label{fig:graph:MkBGPReductionNumplot}
	\end{figure}

In this case, it can be verified that, given an arbitrary $C_i$  in $H$, the nodes in $C_i$ will all belong to $S_H$ or $T_H$, that is the solution $\textsc{PAR}_H$ will
not split $C_i$.
It can be proved by contradiction.
Let us assume that $\textsc{PAR}_H$ split $C_i$.
Then, consider how many triangles in $C_i$ are destroyed.
There are at least $|V_G|+1$ nodes totally in $C_i$, and the two parts obtained by $\textsc{PAR}_H$ are
of size $x$ and $y$ respectively.
Without loss of generality, it is assumed that $x\leq y$.
Then, it can be known that at least $F(x)=x\cdot \frac{y(y-1)}{2}=x\cdot\frac{(X-x)(X-1-x)}{2}$ triangles are destroyed,
where $1\leq x\leq X/2$.
The relationship between $x$ and the function value $F(x)$ is shown intuitively in Fig.~\ref{fig:graph:MkBGPReductionNumplot}.
It is not hard to find that when the condition $1\leq x\leq X/2$ is satisfied, the two extreme values of $F(x)$
are $X^3/8-X^2/4$ (for $x=X/2$) and $(X-1)(X-2)/2$ (for $x=1$).
Furthermore, when $X\geq 2$ which can be satisfied during the construction of the reduction,
the minimum value is obtained by taking $x=1$.
According to the reduction, we have $X\geq |V_G|+1$, therefore, $F(1)=(X-1)(X-2)/2\geq (|V_G|-1)|V_G|/2$.
As a consequence, we have $\textsf{cost}_{\Delta}(\textsc{PAR}_H)\geq F(1)\geq (|V_G|-1)|V_G|/2>C$,
which is a contradiction.
Therefore, it can be known that $\textsc{PAR}_H$ will not split $C_i$, that is all nodes of $C_i$ belong to $S_H$ or $T_H$.

Next, we can use the partitioning solution for $H$ to build a 2-balanced partitioning solution for $G$, whose details are as follows.
For each clique $C_i$ of $H$, if all nodes of $C_i$ belong to $S_H$, the related node $v_i$ of $G$ will be placed into $S_G$.
Otherwise, the nodes of $C_i$ belong to $T_H$ and the corresponding node will be placed into $T_G$.
Then, it can be verified that some edge $(v_i,v_j)$ in $G$ is cut by the partitioning solution $\textsc{PAR}_G$
if and only if the triangle constructed for the edge $(v_i,v_j)$ in $H$ is destroyed by $\textsc{PAR}_H$.
Thus, we have $\textsf{cost}(\textsc{PAR}_G)=\textsf{cost}_{\Delta}(\textsc{PAR}_H)\leq C=B$.

In conclusion, there is a polynomial time reduction from the graph bisection problem to the {\lxmProblemMotifTriangleTwoPar}
problem, and the {\lxmProblemMotifTriangleTwoPar} problem is $\lxmNP$-hard.
Finally, it is proved that the {\lxmProblemMotifTriangleTwoPar} problem is $\lxmNP$-complete.
\qed
\end{proof}

According to Theorem~\ref{theorem:graph:MkBGPNPCTriangle}, the proof of Theorem~\ref{theorem:graph:MkBGPNPC} can be given as follows.
\begin{proof}[The brief proof of Theorem~\ref{theorem:graph:MkBGPNPC}]
The lower bound for the {\lxmProblemMotif} can be obtained trivially,
since, according to Theorem~\ref{theorem:graph:MkBGPNPCTriangle}, when $k=2$
the {\lxmProblemMotifTwo} problem is $\lxmNP$-hard, and it is a special case of {\lxmProblemMotif}.

Therefore, in the followings, all we need to do is to explain that {\lxmProblemMotif} belongs to $\lxmNP$.
The corresponding decision variant of {\lxmProblemMotif} can be expressed as follows.
Given a graph $G$, a positive integer $k$, a $c$-Motif $M$ and a positive integer $C$,
the problem is to determine whether there is a $k$-balanced partitioning solution $\textsc{PAR}_G$ for $G$
such that $\textsf{cost}_{M}(\textsc{PAR}_G)\leq C$.
Then, a $\lxmNP$ algorithm for deciding the {\lxmProblemMotif} problem can be designed as follows.
(1) First, guess a $k$-balanced partitioning solution $\textsc{PAR}_G$ for $G$ and initialize a counter $\textsf{count}$ using 0.
(2) Then, for each pair $(u_1,\cdots,u_c)\in{V_G^c}$, if the induced subgraph obtained on $(u_1,\cdots,u_c)$
contains the same structure as $M$, the process will continue, otherwise, the following steps will be ingored.
Next, it will be checked whether all $\textsc{PAR}_G(u_i)$ ($i\in[1,c]$) are same, and the counter $\textsf{count}$
will be increased by 1 if not same.
(3) Finally, if  $\textsf{count}\leq C$, the answer `yes' will be returned, otherwise, the answer `no' will be returned.

The correctness of the above algorithm can be verified easily.
Since all operations can be implemented within $O(|V_G|^c\cdot c)$ time, the above algorithm is a $\lxmNP$ algorithm for {\lxmProblemMotif}.
Thus, {\lxmProblemMotif} is in $\lxmNP$.

Finally, it has been proved that the {\lxmProblemMotif} problem is $\lxmNP$-complete, even for the special case of $k=2$.
	\qed
\end{proof}

\subsection{Analysis of Inapproximabilities}
According to the above results, it can be known that, unless $\lxmPTime=\lxmNP$,
it is impossible to design polynomial time algorithm for {\lxmProblemMotif}.
The impossibility remains the same even when $k=2$ and $M$ is a triangle.
Then, a natural direction is trying to design efficient approximation algorithms for {\lxmProblemMotif}.
\begin{theorem}
Assume that $\lxmPTime\neq\lxmNP$, even if the motif $M$ is a triangle,
there is no polynomial time algorithms for 
the {\lxmProblemMotif} problem with finite approximation ratio.
\qed
\end{theorem}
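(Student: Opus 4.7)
The plan is to prove the inapproximability via a gap-preserving polynomial-time reduction from \lxmProblemPrevious (for $k\geq 3$), which by Theorem~\ref{theorem:graph:kBGPAppro} admits no polynomial-time algorithm with finite approximation ratio. The key property to arrange is that a $k$-balanced partition of cost $0$ in the source instance corresponds to a $k$-balanced partition destroying $0$ triangles in the target instance, and vice versa. Then any hypothetical polynomial-time $\alpha$-approximation for \lxmProblemMotifTriangle with finite $\alpha$ must satisfy $\textsf{ALG}\leq\alpha\cdot\textsf{OPT}$, and therefore returns triangle-cost exactly $0$ on every zero-optimum target, which would decide the zero-cost version of \lxmProblemPrevious in polynomial time and contradict Theorem~\ref{theorem:graph:kBGPAppro}.

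First, I would reuse the clique construction from the proof of Theorem~\ref{theorem:graph:MkBGPNPCTriangle}, now for arbitrary $k\geq 3$. Given an instance $(G,k)$ of \lxmProblemPrevious with $|V_G|$ divisible by $k$ (padding with isolated dummy nodes if necessary), pick an even integer $X$ with $X\geq 2\cdot\textsf{deg}(G)$ and $X\geq|V_G|+1$, attach a clique $C_i$ of size $X$ in $H$ for each $v_i\in V_G$, and for each edge $(v_i,v_j)\in E_G$ add two cross edges between fresh unused nodes of $C_i$ and $C_j$ that complete a triangle. The construction is polynomial, $|V_H|=X\cdot|V_G|$, and each $k$-balanced part of $H$ has size $X\cdot|V_G|/k$, which accommodates exactly $|V_G|/k$ intact cliques.

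Next, I would verify the zero-cost equivalence. The forward direction is immediate: given a zero-cost partition $\{V_1,\ldots,V_k\}$ of $G$, place every $C_i$ into the part containing $v_i$; no clique-internal triangle is split, and every inter-clique triangle stays intact since no edges of $G$ are cut. The backward direction reuses the clique-splitting bound from the proof of Theorem~\ref{theorem:graph:MkBGPNPCTriangle}: splitting any $C_i$ destroys at least $(X-1)(X-2)/2 \geq 1$ internal triangles, so a zero-cost partition of $H$ must keep every $C_i$ intact, and the induced assignment of cliques to parts yields a zero-cost $k$-balanced partition of $G$ (cut edges correspond bijectively with destroyed inter-clique triangles). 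Since both cost functions take non-negative integer values, the gap between $\textsf{OPT}=0$ and $\textsf{OPT}\geq 1$ is preserved, so a finite-ratio algorithm on $H$ decides zero-cost \lxmProblemPrevious and yields the contradiction.

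The main obstacle will be translating Theorem~\ref{theorem:graph:kBGPAppro} into a zero-cost decision hardness statement, since the standard form of the result concerns approximation ratios rather than exact decision. However, this is the usual observation: by the multiplicative definition of approximation, any finite-ratio algorithm must output cost $0$ whenever $\textsf{OPT}$ is $0$, and conversely a polynomial-time zero-cost decider for \lxmProblemPrevious would yield a polynomial-ratio approximation (return a zero-cost partition when one exists, an arbitrary partition otherwise), contradicting Theorem~\ref{theorem:graph:kBGPAppro}. Once this translation is in place, the clique-based gap-preservation argument above completes the proof.
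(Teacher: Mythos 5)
Your reduction itself (attach a clique $C_i$ of size $X$ to each $v_i$, add the cross triangles from the proof of Theorem~\ref{theorem:graph:MkBGPNPCTriangle}, and argue that zero cut cost in $G$ corresponds exactly to zero triangle cost in $H$) is sound, but the step you yourself flag as the ``main obstacle'' is where the argument as written breaks. Theorem~\ref{theorem:graph:kBGPAppro} is a statement about approximation, not about the zero-cost decision problem, and your proposed converse --- ``a polynomial-time zero-cost decider for \lxmProblemPrevious{} would yield a finite-ratio approximation: return a zero-cost partition when one exists'' --- conflates decision with search. A yes/no decider does not furnish the zero-cost partition that a finite-ratio approximation algorithm is obliged to output when $\textsf{OPT}=0$, and there is no obvious polynomial self-reduction for \lxmProblemPrevious{} that closes this gap (the natural residual instances with pre-assigned vertices are no longer \lxmProblemPrevious{} instances). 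So the chain ``motif approximation $\Rightarrow$ zero-cost decider $\Rightarrow$ contradiction with Theorem~\ref{theorem:graph:kBGPAppro}'' does not go through as stated.

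The repair is easy and stays within your own framework: skip the decision problem entirely and compose the reduction with the hypothetical algorithm. Given $(G,k)$, build $H$, run the assumed finite-ratio algorithm $\mathcal{A}$ for \lxmProblemMotifTriangle{} on $(H,k)$; if the returned partition destroys no triangle, every clique is intact and it induces a zero-cost $k$-balanced partition of $G$, which you output, and otherwise you output an arbitrary balanced partition of $G$. When $\textsf{OPT}(G)=0$ the first case must occur (since $\mathcal{A}$'s cost is at most $\alpha\cdot 0=0$), and when $\textsf{OPT}(G)\geq 1$ the ratio is at most $|E_G|$, so the composition is itself a polynomial-time finite-ratio algorithm for \lxmProblemPrevious{} with $k\geq 3$, directly contradicting Theorem~\ref{theorem:graph:kBGPAppro}. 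With that fix your route is valid, and it differs from the paper's: the paper reduces from the strongly $\lxmNP$-complete 3-Partition problem, using only disjoint cliques $K_i$ of size $a_i$ (no cross triangles), and derives the contradiction by observing that a finite-ratio algorithm would yield a pseudo-polynomial algorithm for 3-Partition; that construction is simpler and self-contained, whereas yours leans on Theorem~\ref{theorem:graph:kBGPAppro} as a black box plus the heavier gadget from the $\lxmNP$-completeness proof.
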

\begin{proof}
The proof is similar with \cite{Andreev06BalancedGraph},
which will build a reduction from the 3-Partition problem to {\lxmProblemMotif}.
The input of the 3-Partition problem is composed of $n=3k$ positive integers $\{a_1,\cdots,a_n\}$
and a threshold $S$ satisfying $S/4<a_i<S/2$ and $\sum_{i=1}^{n}a_i=kS$.
The 3-Partition problem is to determine whether there is a partitioning method to group each three
integers and the corresponding sum is exactly $S$.
It is also one of the most classical $\lxmNP$-complete problems\cite{Garey79ComputersIntractablity}.
Fore more, the 3-partition problem is strongly $\lxmNP$-complete, that is, even if the time cost of
the algorithm can be measured using both the input size $n$ and the maximal integer $v_{max}$ as the parameters,
the problem still does not admit polynomial time algorithms unless $\lxmPTime=\lxmNP$.

Using the unary representation of the 3-Partition problem, the following reduction can compute an instance
of {\lxmProblemMotif} within polynomial time.
For each integer $a_i$, construct a clique $K_i$ of size $a_i$, the union of all $K_i$s is just the input graph $H$ of {\lxmProblemMotif}.
Then, it can be proved that, the answer of the 3-Partition problem is `yes' if and only if the reduced {\lxmProblemMotif}
problem admits a $k$-balanced partitioning solution with cost 0.
Assume that there is a polynomial time algorithm $\mathcal{A}$ for {\lxmProblemMotif} with finite approximation ratio,
then we build the input $H$ and invoke the algorithm $\mathcal{A}$ using parameter $k$.
Let the output of $\mathcal{A}$ be $\textsc{PAR}_H$.
If $\textsf{cost}_{\Delta}(\textsc{PAR}_H)=0$, there is a 0-cost partitioning solution for $H$ and then the answer of
the original 3-Partition problem is `yes'.
Otherwise, if $\textsf{cost}_{\Delta}(\textsc{PAR}_H)>0$, since the approximation ratio of $\mathcal{A}$ is finite,
we have $\textsf{cost}_{\Delta}(\textsc{PAR}_{opt})>0$, that is there is no 0-cost partitioning solution for $H$ and the answer of the 3-Partition problem should be `no'.
Combining the reduction algorithm with $\mathcal{A}$, we can obtain an algorithm $\mathcal{B}$,
and $\mathcal{B}$ is a pseudo polynomial time algorithm for 3-Partition, which is a contradiction since it is strongly $\lxmNP$-complete.
Finally, the proof is finished.\qed
\end{proof}

\section{Graph Partitioning Algorithm for Optimizing Motif Computation}
\label{sec:graph:mkbgpalg}

\subsection{SDP Based Approximation Algorithm for {\lxmProblemMotifTriangle}}
First, focusing on the special case that the motif is a triangle,
an approximation algorithm based on SDP is designed for the {\lxmProblemMotif} problem,
and then the proposed algorithm is extended to the general case.

\subsubsection{The Semidefinite Program for {\lxmProblemMotif}}
Given the input graph $G$ and an integer $k$, the semidefinite program for the {\lxmProblemMotifTriangle}
problem can be expressed by Equation~\eqref{eq:graph:sdpMkBGP:goal}.
\begin{alignat}{7}
	{}& \min\quad &{}& \sum_{\bigtriangleup_{\langle a,b,c\rangle}\in G}
	\| \bm{x}_{abc}\|^2_2 &{}& \label{eq:graph:sdpMkBGP:goal}\\
	{}& \mbox{s.t.}\quad&{}& \|\bm{u}-\bm{v}\|_2^2+\|\bm{v}-\bm{w}\|_2^2\;\geq\; \|\bm{u}-\bm{w}\|_2^2 &\quad& \forall u,v,w \in V\label{eq:graph:sdpMkBGP:A}\\
	{}& {}&{}&\sum_{v\in S}\frac{1}{2}\|\bm{u}-\bm{v}\|_2^2\;\geq\;|S|-\frac{n}{k} &{}& \forall S\subseteq V,u\in{S}\label{eq:graph:sdpMkBGP:B}\\
	{}& {}&{}& \|\bm{x}_{abc}\|^2_2\geq \frac{1}{2}\|{\bm{c}-\bm{a}}\|^2_2&{}& \notag\\
	{}& {}&{}& \|\bm{x}_{abc}\|^2_2\geq \frac{1}{2}\|{\bm{a}-\bm{b}}\|^2_2 &{}& \notag \\
	{}& {}&{}& \|\bm{x}_{abc}\|^2_2\geq \frac{1}{2}\|{\bm{b}-\bm{c}}\|^2_2&{}& \forall \bigtriangleup_{\langle a,b,c\rangle}\in G\label{eq:graph:sdpMkBGP:C}
\end{alignat}
Here, similar with Equation~\eqref{eq:graph:WkBGPsdp:A},
the $\|\cdot\|_2^2$ in Equation~\eqref{eq:graph:sdpMkBGP:goal} is the $L2$ normal form.
The constraints represented by Equation~\eqref{eq:graph:sdpMkBGP:A}
and \eqref{eq:graph:sdpMkBGP:B} are the triangle inequality defined over $\ell_2^2$ and the spreading constraints.
The principle of using vectors to represent the partitioning information is also similar with Equation~\eqref{eq:graph:WkBGPsdp:A}.
If the distances between $\bm{u}$ and $\bm{v}$ is too large, it means that $\textsc{PAR}(u)\neq\textsc{PAR}(v)$,
otherwise, $\textsc{PAR}(u)=\textsc{PAR}(v)$.
The goal of the SDP shown in Equation~\eqref{eq:graph:sdpMkBGP:goal}
is to simulate the cost function $\textsf{cost}_{M=\Delta}(\cdot)$ defined by Equation~\eqref{eq:graph:motifCost}
for the {\lxmProblemMotif} problem, which can be explained as follows.
For each triangle $\Delta_{\langle a,b,c\rangle}$ in $G$, the value of the corresponding vector $\bm{x}_{abc}$
indicates whether the triangle $\Delta_{\langle a,b,c\rangle}$ is destroyed in the partitioning solution.
According to the definition of $\textsf{cost}_{M}(\cdot)$ in {\lxmProblemMotif},
the triangle $\Delta_{\langle a,b,c\rangle}$ is destroyed if and only if at least one edge in the triangle is spitted by the partition.
Intuitively, for Equation~\eqref{eq:graph:sdpMkBGP:goal}, the length $\|\bm{x}_{abc}\|_2^2$
of the vector $\bm{x}_{abc}$ is expected to be 0 or 1 in the ideal case.
If $\|\bm{x}_{abc}\|_2^2=1$, the partitioning solution corresponding to the SDP solution will destroy the triangle $\Delta_{\langle a,b,c\rangle}$,
otherwise, all nodes of the triangle $\Delta_{\langle a,b,c\rangle}$ are placed in one partition.
In the practical setting, when building the partitioning solution based on SDP,
if $\|\bm{x}_{abc}\|_2^2$ is small, it will tend to put all the three nodes in one partition, and only when the value of $\|\bm{x}_{abc}\|_2^2$ is not small
it will be possible to split the three nodes.
For constructing the relationship between splitting triangles and the vectors,
the constraints shown in Equation~\eqref{eq:graph:sdpMkBGP:C} are added,
where three constraints for each triangle are added and the total size can be bounded by $O(n^3)$.
According to the discussions above, the expression like $\frac{1}{2}\|\bm{a}-\bm{b}\|_2^2$
can be used to determine whether two nodes should be put together.
Then, according to the constraint $\|\bm{x}_{abc}\|^2_2\geq \frac{1}{2}\|{\bm{a}-\bm{b}}\|^2_2$ defined by
Equation~\eqref{eq:graph:sdpMkBGP:C}, if $\frac{1}{2}\|\bm{a}-\bm{b}\|_2^2$ takes a rather large value, the value of $\|\bm{x}_{abc}\|^2_2$
is large too.
Similarly, only when $\frac{1}{2}\|\bm{a}-\bm{b}\|_2^2=0$, the value of $\|\bm{x}_{abc}\|^2_2$ can be 0
and the optimizing goal \eqref{eq:graph:sdpMkBGP:goal} can be minimized then.
Therefore, because the optimizing goal \eqref{eq:graph:sdpMkBGP:goal} should be minimized,
for the three constraints described in Equation~\eqref{eq:graph:sdpMkBGP:C},
the value of $\|\bm{x}_{abc}\|^2_2$ is 0, if and only if the three constraints satisfy that the values of $\|\bm{x}_{abc}\|^2_2\geq \frac{1}{2}\|{\bm{c}-\bm{a}}\|^2_2$,
$\|\bm{x}_{abc}\|^2_2\geq \frac{1}{2}\|{\bm{a}-\bm{b}}\|^2_2$ and $\|\bm{x}_{abc}\|^2_2\geq \frac{1}{2}\|{\bm{b}-\bm{c}}\|^2_2$ are all 0.
That is, the triangle $\Delta_{\langle a,b,c\rangle}$ is not destroyed.
Otherwise, the value of $\|\bm{x}_{abc}\|^2_2$ will be at least no less than the max value in the three constraints.

Because the size of the constraints defined by Equation~\eqref{eq:graph:sdpMkBGP:C} can be bounded by $O(n^3)$,
then, according to Theorem~\ref{theorem:graph:sdpSolution}, it can be proved that the SDP \eqref{eq:graph:sdpMkBGP:goal}
can be solved within polynomial time.
\begin{proposition}
The SDP shown in Equation~\eqref{eq:graph:sdpMkBGP:goal} can be optimized in polynomial time.\qed
\end{proposition}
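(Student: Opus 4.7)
The plan is to mirror the proof of Theorem~\ref{theorem:graph:sdpSolution}, adapted to the triangle-based SDP. The general-purpose SDP solver of \cite{Alizadeh95InteriorPoint} runs in $O((N+1/\epsilon)^{O(1)})$ time, so two things must be established: first, that the program can be presented to the solver (either explicitly or via a polynomial-time separation oracle) in size polynomial in $n$; and second, that the optimum is bounded by a polynomial, so choosing $\epsilon$ inversely polynomial still yields polynomial running time and (since the program has integer coefficients in its combinatorial interpretation) an essentially exact solution.

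For the first point I would partition the constraints into three groups. The triangle inequalities in Equation~\eqref{eq:graph:sdpMkBGP:A} number $O(n^3)$ and are written down directly. The new constraints in Equation~\eqref{eq:graph:sdpMkBGP:C} contribute at most $3\binom{n}{3}=O(n^3)$ inequalities (three per triple, only active on actual triangles of $G$) and are likewise written down directly. The spreading constraints in Equation~\eqref{eq:graph:sdpMkBGP:B} are $\Theta(n\cdot 2^n)$ in number and must be handled through a separation oracle exactly as in Theorem~\ref{theorem:graph:sdpSolution}: for each $u\in V_G$, sort the remaining nodes by $\|\bm{u}-\bm{v}\|_2^2$ in increasing order and greedily extend $S$, checking at each extension whether the constraint is violated; this is polynomial per oracle call.

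For the second point I would construct a trivial feasible solution to bound the objective. As in the earlier proof, group $V_G$ into $k$ blocks of size $n/k$ and set $\bm{v}$ to the scaled standard basis vector $[0,\ldots,\sqrt{n},\ldots,0]^T$ corresponding to its block, padded with zeros. Each $\|\bm{u}-\bm{v}\|_2^2$ is then $0$ or $2n$, so the triangle inequality in \eqref{eq:graph:sdpMkBGP:A} and the spreading constraint in \eqref{eq:graph:sdpMkBGP:B} are satisfied by the same case analysis as before. For each triangle $\bigtriangleup_{\langle a,b,c\rangle}$ I would set $\|\bm{x}_{abc}\|_2^2$ equal to the maximum of the three quantities on the right-hand sides of \eqref{eq:graph:sdpMkBGP:C} (which is at most $n$), so \eqref{eq:graph:sdpMkBGP:C} holds by construction. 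Summing over at most $\binom{n}{3}$ triangles bounds the feasible objective value by $O(n^4)$, giving a polynomial upper bound on the optimum.

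Combining the two points, choosing $\epsilon<1/n^5$ in the solver of \cite{Alizadeh95InteriorPoint} yields an approximate optimum within additive error $<1$ of the true optimum, and the algorithm runs in polynomial time in $n$. The main obstacle I expect is the second point: because the auxiliary vectors $\bm{x}_{abc}$ do not appear in the earlier argument, I need to exhibit a valid assignment of them inside the trivial solution that simultaneously satisfies the three inequalities in \eqref{eq:graph:sdpMkBGP:C} and leaves the total objective polynomially bounded. Once that is done, the rest of the argument is a direct transcription of the proof of Theorem~\ref{theorem:graph:sdpSolution}.
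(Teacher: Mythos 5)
Your proposal is correct and follows essentially the same route as the paper, which proves this proposition simply by noting that the new constraints in Equation~\eqref{eq:graph:sdpMkBGP:C} number only $O(n^3)$ and then invoking the argument of Theorem~\ref{theorem:graph:sdpSolution} (separation for the spreading constraints, a polynomially bounded trivial feasible solution, and an inverse-polynomial choice of $\epsilon$). The only addition you make is to spell out the assignment of the auxiliary vectors $\bm{x}_{abc}$ in the trivial solution, a detail the paper leaves implicit, and your treatment of it is sound.
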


\subsubsection{The {\lxmAlgorithmMotif} Algorithm}
In this part, the {\lxmAlgorithmMotif} Algorithm for solving {\lxmProblemMotif}
is introduced, whose main idea is to first solve the corresponding SDP and then use
the rounding techniques to construct the partitioning solution.

\begin{algorithm}[hbt]
	\caption{{\lxmAlgorithmMotif}~(Triangle Computing based $k$ Balanced Graph Partitioning)\label{alg:graph:TrikBGPAlg}}
	\SetKwProg{Fn}{function}{}{end}
	\KwIn{A Graph $G=\{V,E\}$ and an integer $k$.}
	\KwOut{A partition $\textsc{PAR}_G=\{V_1,V_2,\cdots,V_k\}$ for $G$.}
	\Fn{\textsc{Tri-kBGPartition}($G, k$)}{%
		{Compute and maintain all triangles of $G$ into $S_{\Delta}$}\;
		{Solve sdp \eqref{eq:graph:sdpMkBGP:goal} and let $opt=\{\bm{v}_1,\cdots,\bm{v}_n\}$ be the 
			optimal solution}\;
		{$S\leftarrow V$, $\textsc{PAR}_G\leftarrow\emptyset$}\;
		\While{$|S|>\frac{2n}{k}$}{
			{Randomly choose $\bm{r}\in\mathbb{R}^m$ \emph{s.t.} $r_i\sim N(0,1)$ and all $r_i$s are 
				\emph{i.i.d.}}\;
			{$S_r\leftarrow\{v_i\;|\;v_i\in{S},g(\bm{v}_i)\cdot \bm{r}\geq \alpha_{Ck}\}$}\;
			\If{$0<|S_r|\leq\frac{2n}{k}$}{
				{Add $S_r$ to $\textsc{PAR}_G$}\;
				{$S\leftarrow S\setminus S_r$}\;
			}
		}
		\While{$|\textsc{PAR}_G|>k$}{
			{Find the smallest two partition items $V'$ and $V''$ of $\textsc{PAR}_G$}\;
			{Remove $V'$ and $V''$ from $\textsc{PAR}_G$}\;
			{Add $V'\cup V''$ to $\textsc{PAR}_G$}\;
		}
		\textbf{return} {$\textsc{PAR}_G$}\;
	}
\end{algorithm}

The details of {\lxmAlgorithmMotif} are shown in Algorithm~\ref{alg:graph:TrikBGPAlg}.
First, {\lxmAlgorithmMotif} needs to process all triangle structures in $G$ and maintain the
result in the set $S_{\Delta}$ (line 2).
According to Lemma~\ref{lemma:graph:sdp}, {\lxmAlgorithmMotif} can solve the SDP shown
in Equation~\eqref{eq:graph:sdpMkBGP:goal} within polynomial time by invoking a sophisticated solver (line 3),
and let the optimized solution obtained be $\{\bm{v}_1,\cdots,\bm{v}_n\}$.
Then, the set $V$ is used to initialize $S$ and the partition result $\textsc{PAR}_G$ is set to be empty (line 4).
Next, using the rounding technique of \cite{Krauthgamer09PartitioningGraphs}, the
partitioning solution can be built as follows.
\begin{itemize}
	\item[$\checkmark$] 
	First, {\lxmAlgorithmMotif} builds a vector $\bm{r}$ randomly and $\alpha_{Ck}$ based on a variable $X$ generated by
	a normal distribution $N(0,1)$, where the value of $\alpha_{m}$ satisfies $\Pr[X\geq\alpha_{m}]=\frac{1}{m}$ (line 6).
	
	\item[$\checkmark$] 
	Then, {\lxmAlgorithmMotif} utilizes a function $g(\cdot)$ to transform the $\ell_2^2$ distance to a function on $\ell_2$ (line 7).
	The transformation satisfies the following conditions.
	As shown by \cite{Chlamtac06HowPlay}, given $n$ vectors which satisfy the triangle inequality defined on $\ell_2^2$,
	there exists a constant $\delta,A>0$ and  for any $\Lambda>0$ we can find a function $g_{\Lambda}:\mathbb{R}^m\rightarrow\mathbb{R}^n$
	such taht for any two vectors $\bm{u}$ and $\bm{v}$, we have 
	(i) $\|g_{\Lambda}(\bm{u})-g_{\Lambda}(\bm{v})\|_2\leq\frac{A\sqrt{\log 
			n}}{\Lambda}\|\bm{u}-\bm{v}\|_2^2$,
	(ii) 
	$\|\bm{u}-\bm{v}\|_2^2\geq\Lambda\Rightarrow\|g_{\Lambda}(\bm{u})-g_{\Lambda}(\bm{v})\|_2\geq\delta$ and
	(iii) $\|g_{\Lambda}(\bm{u})\|_2=1$.
	Here, the function $g(\cdot)$ used by {\lxmAlgorithmMotif} is a special case of $g_{\Lambda}(\cdot)$ when $\Lambda=2/3$.

	\item[$\checkmark$] 
	Using the above two techniques, in each iteration, {\lxmAlgorithmMotif}
	checks whether the size of $S$ is larger than the $\frac{2n}{k}$ (line 8).
	If the size of $S$ is still too large, {\lxmAlgorithmMotif} randomly selects a vector $\bm{r}$,
	and for each vector $\bm{v}_i$ in $S$, compute the inner product of $\bm{r}$ and $g(\bm{v}_i)$.
	If the result is larger than $\alpha_{Ck}$, the node $v_i$ will be added into $S_r$.
	
	\item[$\checkmark$] 
	During each iteration, if the size of the selected $S_r$ is no larger than $\frac{2n}{k}$,
	it will be treated as a feasible partition, and the set $S$ and $\textsc{PAR}_G$ will be updated (line 8-10).
	Obviously, if each iteration can extract one feasible partition, then the iterations defined between line 6 and 10
	of {\lxmAlgorithmMotif} can be terminated after being executed for polynomial times.
	However, the worst case is that the size of $S_r$ selected in each iteration is too large,
	in this case, the running time can not be bounded efficiently.
	Therefore, only the expected time of {\lxmAlgorithmMotif} is bounded in the following analysis.
	According to \cite{Krauthgamer09PartitioningGraphs}, the probability that the size of $S_r$
	is too large will be not larger than $1/2$.
	Therefore, the expected time of {\lxmAlgorithmMotif} is polynomial.
\end{itemize}

After the rounding procedure, in fact, {\lxmAlgorithmMotif} can not always guarantee
that there are exactly $k$ partitions obtained.
Therefore, {\lxmAlgorithmMotif} needs a post-process to compute the final $k$ partitions (line 11-14).
If the number of partitions is larger than $k$, {\lxmAlgorithmMotif} selects the smallest two partitions
$V'$ and $V''$ from $\textsc{PAR}_G$, and merges them into one partition.
It is not hard to verify that the size of the merged partition will not be larger than $\frac{2n}{k}$.
{\lxmAlgorithmMotif} repeats the above procedure on $\textsc{PAR}_G$ until $|\textsc{PAR}_G|=k$.
Finally, the result $\textsc{PAR}_G$ will be returned by {\lxmAlgorithmMotif} (line 15).

\subsubsection{Approximation Ratio of the \textsc{Tri-kBGPartition} Algorithm}

In this part, the approximation ratio of {\lxmAlgorithmMotif} is analyzed,
which will study the relationship between the optimized solutions of SDP and {\lxmProblemMotif} first,
and then analyze the ratio between the result returned by {\lxmAlgorithmMotif} and
the optimized solution of SDP.

\begin{theorem}\label{theorem:graph:MkBGPLowerBound}
Given an input graph $G$ and an integer $k$, for the {\lxmProblemMotifTriangle} problem,
the optimal solution of the SDP shown in \eqref{eq:graph:sdpMkBGP:goal} is a lower bound
of the optimal partitioning cost of {\lxmProblemMotifTriangle}.
\qed
\end{theorem}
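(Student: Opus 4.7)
The plan is to mirror the argument used in the proof of Theorem~\ref{theorem:graph:WkBGPLowerBound}: starting from an optimal $k$-balanced partitioning solution of the {\lxmProblemMotifTriangle} instance, I would build a feasible solution to the SDP~\eqref{eq:graph:sdpMkBGP:goal} whose objective value equals the motif partitioning cost, and conclude that the SDP optimum, being at most this value, is a lower bound.

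Let $\textsc{PAR}_G^* = \{V_1, \ldots, V_k\}$ be an optimal $k$-balanced partitioning solution with cost $\textsf{cost}_{\Delta}(\textsc{PAR}_G^*)$. For each node $v_i\in V_j$, I would assign $\bm{v}_i$ to be the $j$-th standard basis vector of $\mathbb{R}^m$ (for any ambient dimension $m\geq k+1$, padding with zeros). This encoding gives $\frac{1}{2}\|\bm{u}-\bm{v}\|_2^2\in\{0,1\}$, taking value $0$ exactly when $u,v$ lie in the same partition and value $1$ otherwise. The triangle inequality constraints \eqref{eq:graph:sdpMkBGP:A} and the spreading constraints \eqref{eq:graph:sdpMkBGP:B} are then verified exactly as in the proof of Theorem~\ref{theorem:graph:WkBGPLowerBound}, since the construction of the $\bm{v}_i$ is identical in structure.

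Next, for each triangle $\Delta_{\langle a,b,c\rangle}$ of $G$, I would define $\bm{x}_{abc}$ to be the zero vector if $a,b,c$ all lie in the same partition, and otherwise to be a fixed unit vector (for instance, an extra basis vector reserved for this purpose, so that $\|\bm{x}_{abc}\|_2^2=1$). In the first case, each right-hand side $\frac{1}{2}\|\bm{a}-\bm{b}\|_2^2$, $\frac{1}{2}\|\bm{b}-\bm{c}\|_2^2$, $\frac{1}{2}\|\bm{c}-\bm{a}\|_2^2$ is $0$, so the three constraints in~\eqref{eq:graph:sdpMkBGP:C} hold with $\|\bm{x}_{abc}\|_2^2=0$. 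In the second case, each right-hand side is at most $1$, so they are dominated by $\|\bm{x}_{abc}\|_2^2=1$, and \eqref{eq:graph:sdpMkBGP:C} is again satisfied. By construction, $\|\bm{x}_{abc}\|_2^2$ equals the indicator $\mathbf{I}\bigl(\exists (u,v)\in\Delta_{\langle a,b,c\rangle}\colon \textsc{PAR}_G^*(u)\neq\textsc{PAR}_G^*(v)\bigr)$ that appears in~\eqref{eq:graph:motifCost}, so summing over all triangles shows that the SDP objective achieved by this feasible solution equals $\textsf{cost}_{\Delta}(\textsc{PAR}_G^*)$. Hence the SDP optimum is at most $\textsf{cost}_{\Delta}(\textsc{PAR}_G^*)$, establishing the lower bound.

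There is no serious obstacle here: once the basis-vector embedding is set up, the verification of the constraints \eqref{eq:graph:sdpMkBGP:A} and \eqref{eq:graph:sdpMkBGP:B} is inherited from the earlier proof, and the new constraints \eqref{eq:graph:sdpMkBGP:C} are immediate because $\frac{1}{2}\|\bm{a}-\bm{b}\|_2^2$ takes only the values $0$ and $1$ under this embedding. The only point requiring mild care is ensuring that the $\bm{x}_{abc}$ vectors are placed in coordinates disjoint from the support of the $\bm{v}_i$, so that the distances $\|\bm{u}-\bm{v}\|_2^2$ are unaffected by the choice of $\bm{x}_{abc}$.
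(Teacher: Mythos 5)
Your construction is correct and is essentially the argument the paper intends: it omits the proof of this theorem, but the natural proof is exactly the one you give, mirroring Theorem~\ref{theorem:graph:WkBGPLowerBound} by embedding each node of the optimal $k$-balanced partition as a standard basis vector and setting $\bm{x}_{abc}$ to the zero vector or a unit vector according to whether the triangle is destroyed, so that the resulting feasible SDP solution has objective value $\textsf{cost}_{\Delta}(\textsc{PAR}_G^*)$ and the SDP optimum is at most the optimal partitioning cost. The only caveat, inherited from the paper's own proof of Theorem~\ref{theorem:graph:WkBGPLowerBound}, is the rounding slack between the spreading constraint's $\frac{n}{k}$ and the allowed partition size $\lceil\frac{n}{k}\rceil$ when $k$ does not divide $n$, which both you and the paper gloss over.
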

\begin{proof}
The details are omitted due to the space limits.
\end{proof}

According to \cite{Krauthgamer09PartitioningGraphs}, we have  the following result.
\begin{lemma}[\cite{Krauthgamer09PartitioningGraphs}]\label{lemma:graph:preRatio}
Given two nodes $u$ and $v$ in $G$, the probability that {\lxmAlgorithmMotif} will place them to two
different partitions is not larger than $3AB\sqrt{2\log(2Ck)\log n}\cdot||\bm{u}-\bm{v}||_2^2$,
where the parameters $A$, $B$ and $C$ are constants whose details can be found in \cite{Krauthgamer09PartitioningGraphs}.
\qed
\end{lemma}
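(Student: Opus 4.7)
The plan is to track, across the iterative rounding in Algorithm~\ref{alg:graph:TrikBGPAlg} (lines 5--10), the probability that $u$ and $v$ are eventually placed into different elements of $\textsc{PAR}_G$, and to express it in terms of the Euclidean distance between the unit vectors $\bm{x}=g(\bm{u})$ and $\bm{y}=g(\bm{v})$ delivered by the embedding $g=g_{2/3}$.  The final step transports this bound back to the SDP geometry via property~(i) of $g_{\Lambda}$.

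First I would analyze a single iteration in which both $u$ and $v$ are still in the active set $S$.  Conditioned on this, $u$ enters the candidate set $S_r$ iff $\bm{x}\cdot\bm{r}\geq\alpha_{Ck}$, and similarly for $v$.  The probability that exactly one of the two events holds can be bounded by the probability that a standard normal variable lies in an interval of length $\|\bm{x}-\bm{y}\|_2$ centered near the threshold $\alpha_{Ck}$.  Using the standard Gaussian density estimate together with the definition $\Pr[X\geq\alpha_{Ck}]=1/(Ck)$ (so that $\alpha_{Ck}=\Theta(\sqrt{\log(Ck)})$ and the density at $\alpha_{Ck}$ is $\Theta(\sqrt{\log(Ck)}/(Ck))$), this gives a per-iteration separation probability of order $B\|\bm{x}-\bm{y}\|_2\sqrt{2\log(2Ck)}/(Ck)$ for a suitable constant $B$.

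Next I would combine the per-iteration estimate with the stochastic process structure of the outer \textbf{while} loop.  A key observation, recorded in \cite{Krauthgamer09PartitioningGraphs}, is that with probability at least $1/2$ a fresh trial actually produces a valid $S_r$ of size at most $2n/k$, so the expected number of executed iterations is $O(k)$; conditioning on a successful extraction only changes the per-iteration separation probability by a constant factor.  Summing the per-iteration separation probability over the (at most $O(k)$ in expectation) iterations, with the obvious coupling that once $u$ and $v$ are co-extracted they are thereafter safe, produces a total separation probability of order $B\sqrt{2\log(2Ck)}\cdot\|\bm{x}-\bm{y}\|_2$.

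Finally I would invoke property~(i) of the embedding $g_{\Lambda}$ with $\Lambda=2/3$, which gives $\|\bm{x}-\bm{y}\|_2=\|g(\bm{u})-g(\bm{v})\|_2\leq \tfrac{3A\sqrt{\log n}}{2}\|\bm{u}-\bm{v}\|_2^2$, and absorb the constant $3/2$ into the claimed factor of $3AB\sqrt{2\log(2Ck)\log n}$.  The main obstacle, which is where the proof in \cite{Krauthgamer09PartitioningGraphs} does real work, is the careful conditional analysis in the second step: the per-iteration event depends on which nodes remain in $S$, and one must verify that discarding the iterations with $|S_r|$ too large (line 8) only worsens the separation probability by a constant, and that the post-processing merge step (lines 11--14) never separates two nodes that were previously grouped together.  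Because merging only unions existing parts, this last point is automatic and the bound transfers intact to the returned partition.
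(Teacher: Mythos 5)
The paper does not prove Lemma~\ref{lemma:graph:preRatio} at all: it is imported verbatim from \cite{Krauthgamer09PartitioningGraphs} and stated without argument, so there is no internal proof to compare yours against. Judged on its own terms, your sketch is a faithful reconstruction of the argument in that reference: per-iteration Gaussian threshold analysis giving a separation probability proportional to $\|g(\bm{u})-g(\bm{v})\|_2\sqrt{\log(Ck)}/(Ck)$, accumulation over the outer loop, and transport back to $\ell_2^2$ via property~(i) of $g_{\Lambda}$ with $\Lambda=2/3$, which is exactly how the $\sqrt{\log n}$ and $\sqrt{\log k}$ factors arise and multiply. The one step I would push back on is your accumulation argument: the expected number of executed iterations of the outer loop is not $O(k)$ but more like $\Theta(Ck\log(\cdot))$, since each surviving node is captured with probability only about $1/(Ck)$ per trial; naively summing the per-iteration separation probability over that many iterations would cost you an extra logarithmic factor. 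The correct bookkeeping (and the real content of the lemma in \cite{Krauthgamer09PartitioningGraphs}) is to condition on the first iteration in which at least one of $u,v$ is captured and bound the ratio $\Pr[\text{separated}]/\Pr[\text{captured}] = O\bigl(\|g(\bm{u})-g(\bm{v})\|_2\sqrt{\log(Ck)}\bigr)$ per trial, which telescopes cleanly over the process regardless of how many iterations occur. Your closing observations --- that rejecting oversized $S_r$ costs only a constant factor and that the merge phase in lines 11--14 can never separate already co-located nodes --- are correct and are needed for the bound to apply to the final output.
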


Given a triangle $\Delta_{\langle a,b,c\rangle}$ in $G$, we use the notation
$\|\Delta_{\langle 	a,b,c\rangle}\|_2^2$ to represent the value of $\|\bm{a}-\bm{b}\|_2^2+\|\bm{b}-\bm{c}\|_2^2+\|\bm{a}-\bm{c}\|_2^2$.
Let $\mathcal{A}_{abc}$ be the event that the triangle $\Delta_{\langle a,b,c\rangle}$ is destroyed  by {\lxmAlgorithmMotif}.
By analyzing the probability of $\mathcal{A}_{abc}$, we have the following result.

\begin{theorem}\label{theorem:graph:roundingboundA}
For a node subset $U\subseteq V$, which satisfies that the three nodes of $\Delta_{\langle a,b,c\rangle}$ are all in $U$,
assuming that during some  iteration of {\lxmAlgorithmMotif} we have $S=U$, the probability that the event $\mathcal{A}_{abc}$
happens satisfies the following condition.
	\begin{equation}
		\Pr[\mathcal{A}_{abc}|S=U]\leq \frac{2\|\Delta_{\langle 
				a,b,c\rangle}\|_2^2}{3}\cdot 3AB\sqrt{2\log(2Ck)\log n}
	\end{equation}
Here, the parameters $A$, $B$ and $C$ are same as Lemma~\ref{lemma:graph:preRatio}.
\qed
\end{theorem}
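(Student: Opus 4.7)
The plan is to bound $\Pr[\mathcal{A}_{abc}\mid S=U]$ by reducing the event ``triangle destroyed'' to a weighted sum of the three edge-cut events and then applying the per-edge bound of Lemma~\ref{lemma:graph:preRatio}. Let $X_{uv}$ denote the indicator of the event that nodes $u$ and $v$ end up in different partitions of the final $\textsc{PAR}_G$ returned by {\lxmAlgorithmMotif}. Then by definition of $\mathcal{A}_{abc}$ we have $\mathbf{1}[\mathcal{A}_{abc}] = \mathbf{1}[X_{ab}\vee X_{bc}\vee X_{ac}]$.

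The key combinatorial observation is that placing three vertices into partitions yields only three possible patterns of edge cuts: if all three are in the same part, no triangle edge is cut; if exactly two are together, exactly two of the three triangle edges are cut; if all three are separated, all three are cut. In particular, whenever $\mathcal{A}_{abc}$ occurs, at least two of the three indicators $X_{ab},X_{bc},X_{ac}$ must equal~$1$. This yields the pointwise deterministic inequality
\[
\mathbf{1}[\mathcal{A}_{abc}] \;\le\; \tfrac{2}{3}\bigl(X_{ab}+X_{bc}+X_{ac}\bigr),
\]
since the right-hand side takes value $0$ when no edge is cut, $4/3\ge 1$ when two edges are cut, and $2\ge 1$ when all three are cut.

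Next I would take conditional expectations given $S=U$ and invoke Lemma~\ref{lemma:graph:preRatio} on each of the three edges. Since $\{a,b,c\}\subseteq U$, the lemma applies to each pair and yields $\Pr[X_{uv}\mid S=U]\le 3AB\sqrt{2\log(2Ck)\log n}\cdot\|\bm{u}-\bm{v}\|_2^{2}$. Summing over the three edges and using the definition of $\|\Delta_{\langle a,b,c\rangle}\|_2^2=\|\bm{a}-\bm{b}\|_2^2+\|\bm{b}-\bm{c}\|_2^2+\|\bm{a}-\bm{c}\|_2^2$, the factor $\tfrac{2}{3}$ combines with the union to give exactly the stated bound
\[
\Pr[\mathcal{A}_{abc}\mid S=U] \;\le\; \tfrac{2\|\Delta_{\langle a,b,c\rangle}\|_2^{2}}{3}\cdot 3AB\sqrt{2\log(2Ck)\log n}.
\]

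The main obstacle is the conditioning on $S=U$: Lemma~\ref{lemma:graph:preRatio} is stated over the full randomness of {\lxmAlgorithmMotif}, so one must argue that its guarantee is preserved once we fix the state at the iteration in question. The natural justification is that all hyperplane vectors $\bm{r}$ used across iterations are drawn independently from the same Gaussian distribution, so conditioning on $S=U$ at a particular iteration only restricts which partitions have already been extracted; the remaining rounding on $U$ proceeds with fresh, independent randomness, and the rounding analysis in \cite{Krauthgamer09PartitioningGraphs} applies directly to the sub-instance induced on $U$. Once this is made precise, the per-edge bound transfers to the conditional setting and the computation above completes the proof.
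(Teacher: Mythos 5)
Your proof is correct, and it reaches the stated bound by a slightly different decomposition than the paper. The paper orders the three edge lengths (WLOG $\|\bm{a}-\bm{b}\|_2^2\le\|\bm{b}-\bm{c}\|_2^2\le\|\bm{a}-\bm{c}\|_2^2$), writes $\Pr[\mathcal{A}_{abc}]=\Pr[\mathcal{A}_{a,b}]+\Pr[\overline{\mathcal{A}}_{a,b}\wedge\mathcal{A}_{b,c}]\le\Pr[\mathcal{A}_{a,b}]+\Pr[\mathcal{A}_{b,c}]$, applies Lemma~\ref{lemma:graph:preRatio} to the two shortest edges only, and then uses the metric fact that the two smallest of three nonnegative numbers sum to at most $\tfrac{2}{3}$ of the total to bring in $\|\Delta_{\langle a,b,c\rangle}\|_2^2$. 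You instead use the purely combinatorial observation that a partition of three vertices cuts $0$, $2$, or $3$ of the triangle's edges, never exactly one, giving the pointwise inequality $\mathbf{1}[\mathcal{A}_{abc}]\le\tfrac{2}{3}(X_{ab}+X_{bc}+X_{ac})$, and then apply the per-edge lemma to all three edges by linearity; this needs no ordering of the edge lengths, and in fact would even yield the sharper constant $\tfrac{1}{2}$ in place of $\tfrac{2}{3}$. Both routes rest on the same key ingredient, Lemma~\ref{lemma:graph:preRatio}. One further point in your favor: you explicitly flag that the per-edge bound must be transferred to the conditional setting $S=U$ (fresh, independent Gaussian directions in the remaining iterations), a step the paper's proof passes over silently by writing $\Pr[\mathcal{A}_{abc}]$ without conditioning; your sketch of why the conditioning is harmless is the right justification, though like the paper you leave its full formalization to the rounding analysis of \cite{Krauthgamer09PartitioningGraphs}.
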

\begin{proof}
	Let $\mathcal{A}_{x,y}$ be the event that $x$ and $y$ are separated.
	Assume that $||a-b||_2^2\leq||b-c||^2_2\leq||a-c||^2_2$.
	Obviously, the probability of $\mathcal{A}_{abc}$ can be decomposed into two parts as follows.
	\[\Pr[\mathcal{A}_{abc}]=\Pr[\mathcal{A}_{a,b}]+\Pr[\overline{\mathcal{A}}_{a,b}\wedge\mathcal{A}_{b,c}]\]
	Here, $\overline{\mathcal{A}}$ is the event that $\mathcal{A}$ does not happen.
	Then, we have the following result.
	\begin{align*}
		\Pr[\mathcal{A}_{abc}] & =\Pr[\mathcal{A}_{a,b}]+\Pr[\overline{\mathcal{A}}_{a,b}\wedge\mathcal{A}_{b,c}]\\
		&\leq \Pr[\mathcal{A}_{a,b}]+\Pr[\mathcal{A}_{b,c}]\\
		&\leq \frac{(1+\epsilon)AB}{\epsilon}\sqrt{2\log(Ck/\epsilon)\log n}\bigg(||a-b||_2^2+||b-c||_2^2\bigg)\\
		&\leq \frac{2}{3}\frac{(1+\epsilon)AB}{\epsilon}\sqrt{2\log(Ck/\epsilon)\log n}\bigg(||a-b||_2^2+||b-c||_2^2+||a-c||_2^2\bigg)\\
		&=\frac{2||\bigtriangleup_{\langle a,b,c\rangle}||_2^2}{3}\cdot\frac{(1+\epsilon)AB}{\epsilon}\sqrt{2\log(Ck/\epsilon)\log n}
	\end{align*}
	Here, the second inequality is because of Lemma~\ref{lemma:graph:preRatio},
	the third inequality is derived from the fact that the distance between $a$ and $c$
	is maximized in the triangle $\bigtriangleup_{\langle a,b,c\rangle}$.
\qed
\end{proof}

Based on Theorem~\ref{theorem:graph:MkBGPLowerBound} and \ref{theorem:graph:roundingboundA},
the approximation ratio of {\lxmAlgorithmMotif} can be obtained by the following theorem.

\begin{theorem}[Approximation Ratio of {\lxmAlgorithmMotif}]\label{theorem:graph:MkBGP:ratio}
	The triangle based balanced graph partitioning problem can be solved by a \emph{bi-criteria} approximation algorithm within expected ratio $O(\sqrt{\log k\log n})$,
	where the size of each partition is at most $\frac{2n}{k}$. 
	\qed
\end{theorem}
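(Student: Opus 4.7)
The plan is to combine the SDP lower bound from Theorem~\ref{theorem:graph:MkBGPLowerBound} with the per-triangle destruction probability bound from Theorem~\ref{theorem:graph:roundingboundA}, then use linearity of expectation to conclude that the expected cost of the algorithm's output is $O(\sqrt{\log k \log n})$ times the SDP optimum, which lower-bounds the true optimum. The balance guarantee will follow from the termination criterion in the rounding loop together with the merging post-process.

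The key computation I would carry out is the following. For each triangle $\Delta_{\langle a,b,c\rangle}$ in $G$, summing the three constraints in Equation~\eqref{eq:graph:sdpMkBGP:C} gives $3\|\bm{x}_{abc}\|_2^2 \geq \tfrac{1}{2}\|\Delta_{\langle a,b,c\rangle}\|_2^2$, that is $\|\Delta_{\langle a,b,c\rangle}\|_2^2 \leq 6\|\bm{x}_{abc}\|_2^2$. Let $\textsc{PAR}_G$ denote the output of {\lxmAlgorithmMotif}. Using Theorem~\ref{theorem:graph:roundingboundA} conditioned on each iteration's state $S$, taking expectations and summing over triangles by linearity gives
\begin{align*}
\mathbb{E}[\textsf{cost}_{M=\Delta}(\textsc{PAR}_G)] &\leq \sum_{\Delta_{\langle a,b,c\rangle} \in G} \frac{2\|\Delta_{\langle a,b,c\rangle}\|_2^2}{3}\cdot 3AB\sqrt{2\log(2Ck)\log n} \\
&\leq O(\sqrt{\log k \log n})\cdot\sum_{\Delta_{\langle a,b,c\rangle}\in G}\|\bm{x}_{abc}\|_2^2.
\end{align*}
The rightmost sum is exactly the SDP objective in Equation~\eqref{eq:graph:sdpMkBGP:goal}, which by Theorem~\ref{theorem:graph:MkBGPLowerBound} does not exceed the optimal {\lxmProblemMotifTriangle} cost. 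This yields the claimed ratio. For the balance constraint, by inspection of lines 8--10 of Algorithm~\ref{alg:graph:TrikBGPAlg} every partition extracted by the rounding phase has size at most $\tfrac{2n}{k}$, and in the merging phase the two smallest partitions among more than $k$ pieces have total size at most $\tfrac{2n}{k}$ since their average size is at most $\tfrac{n}{k}$; hence the invariant $|V_i| \leq \tfrac{2n}{k}$ is preserved until exactly $k$ partitions remain. Combined with the polynomial expected time of the SDP solver and the fact that each rounding iteration succeeds with probability at least $1/2$ as noted from \cite{Krauthgamer09PartitioningGraphs}, the entire procedure runs in expected polynomial time.

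The main obstacle I expect is the careful bookkeeping around the rounding phase: one must verify that conditioning on the state $S$ in each iteration is compatible with the per-triangle bound from Theorem~\ref{theorem:graph:roundingboundA} when summed over all iterations (since a triangle may be ``at risk'' in the iteration where one of its vertices is removed from $S$), and that the merging post-process does not destroy additional triangles beyond those already counted, which holds because merging only unites partitions and cannot split any previously intact triangle. Once these points are handled, the three ingredients---the SDP lower bound, the constraint-driven inequality $\|\Delta\|_2^2 \leq 6\|\bm{x}_{abc}\|_2^2$, and the rounding probability bound---fit together to give the stated bi-criteria $O(\sqrt{\log k \log n})$ approximation with partition size at most $\tfrac{2n}{k}$.
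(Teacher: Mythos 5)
Your proposal is correct and follows essentially the same route as the paper's own proof: the bound $\|\Delta_{\langle a,b,c\rangle}\|_2^2\leq 6\|\bm{x}_{abc}\|_2^2$ from the constraints in Equation~\eqref{eq:graph:sdpMkBGP:C}, Theorem~\ref{theorem:graph:roundingboundA} with linearity of expectation to bound $\mathbf{E}[\textsf{cost}_{\Delta}(\textsc{PAR}_G)]$ by $O(\sqrt{\log k\log n})$ times the SDP optimum, and Theorem~\ref{theorem:graph:MkBGPLowerBound} to compare that optimum with $\textsf{cost}_{\Delta}(\textsc{PAR}^*_G)$. Your added remarks on the merging post-process and the balance invariant only make explicit what the paper handles in the algorithm description, so nothing essential differs.
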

\begin{proof}
Given an instance $I=\langle G,k\rangle$ of {\lxmProblemMotifTriangle},
let $\textsc{PAR}_G$ be the partitions built by Algorithm~\ref{alg:graph:TrikBGPAlg} and $\textsc{PAR}^*_G$ be
the $k$-balanced partition of $I$ with the optimal partitioning cost.
For more, let $\textsf{cost}_{\Delta}(\textsc{PAR}_G)$ and $\textsf{cost}_{\Delta}(\textsc{PAR}^*_G)$ be the corresponding
partitioning cost of $\textsc{PAR}_G$ and $\textsc{PAR}^*_G$.
	
Since $\textsc{PAR}^*_G$ is optimal, we have $\textsf{cost}_{\Delta}(\textsc{PAR}_G)\geq \textsf{cost}_{\Delta}(\textsc{PAR}^*_G)$
obviously. Then, since the partitioning solution of {\lxmAlgorithmMotif} is based on the
optimal solution of the SDP shown in \eqref{eq:graph:sdpMkBGP:goal}, let
the optimal solution of the SDP be $S^*$.
Then, according to Theorem~\ref{theorem:graph:MkBGPLowerBound}, we have $S^*\leq \textsf{cost}_{\Delta}(\textsc{PAR}^*_G)$.
Therefore, $S^*\leq \textsf{cost}_{\Delta}(\textsc{PAR}^*_G)\leq \textsf{cost}_{\Delta}(\textsc{PAR}_G)$.
	
Because the rounding techniques used by {\lxmAlgorithmMotif} is randomized,
we analyze the expected cost of the partitioning solution returned by {\lxmAlgorithmMotif}.
	\begin{align*}
		\mathbf{E}[\textsf{cost}_{\Delta}(\textsc{PAR}_G)] & =\sum_{\Delta_{\langle 
				a,b,c\rangle}\in{G}}\bigg(\Pr[\mathcal{A}_{abc}]\bigg)\\
		&\leq \sum_{\Delta_{\langle a,b,c\rangle}\in{G}}\bigg(\frac{2\|\Delta_{\langle 
				a,b,c\rangle}\|_2^2}{3}\cdot 3AB\sqrt{2\log(2Ck)\log n}\bigg)\\
		&\leq \sum_{\Delta_{\langle 
				a,b,c\rangle}\in{G}}\bigg(4\|\bm{x}_{abc}\|_2^2\cdot 3AB\sqrt{2\log(2Ck)\log
			n}\bigg)\\
		&=\bigg(12AB\sqrt{2\log(2Ck)\log n}\bigg)\cdot S^*
	\end{align*}
Here, the first inequality is based on Theorem~\ref{theorem:graph:roundingboundA},
the second one is based on the constraints defined by \eqref{eq:graph:sdpMkBGP:C},
and the last one is obtained by considering the definition of the optimizing goal of \eqref{eq:graph:sdpMkBGP:goal}.

Thus, we have the following result.
	\begin{align*}
		S^*&\leq \textsf{cost}_{\Delta}(\textsc{PAR}^*_G)
		\leq\mathbf{E}[\textsf{cost}_{\Delta}(\textsc{PAR}_G)]
		\leq \bigg(12AB\sqrt{2\log(2Ck)\log n}\bigg)\cdot S^*
	\end{align*}
	
Finally, we can obtain the followings.
	\begin{align*}
		\mathbf{E}\bigg[\frac{\textsf{cost}_{\Delta}(\textsc{PAR}_G)}{\textsf{cost}_{\Delta}(\textsc{PAR}^*_G)}\bigg]
		\leq
		12AB\sqrt{2\log(2Ck)\log n}
		=O\bigg(\sqrt{\log k\log n}\bigg)
	\end{align*}

That is, the expected approximation ratio of {\lxmAlgorithmMotif} can be bounded by $O(\sqrt{\log k\log n})$.
\qed
\end{proof}

\subsection{Extension to the General MkBGP Problem}

Based on the algorithm designed for the {\lxmProblemMotifTriangle} problem, 
for the general case of {\lxmProblemMotif}, {\lxmAlgorithmMotif} can be naturally extended and guarantee the same approximation ratio.
Here, the corresponding SDP can be represented by the following form.
\begin{alignat}{7}
	{}& \min\quad &{}& \sum_{H\in{G\langle M\rangle}}
	\| \bm{x}_{H}\|^2_2 &{}& \label{eq:graph:sdpMkBGPExt:goal}\\
	{}& \mbox{s.t.}\quad&{}& \|\bm{u}-\bm{v}\|_2^2+\|\bm{v}-\bm{w}\|_2^2\;\geq\; \|\bm{u}-\bm{w}\|_2^2 &\quad& \forall u,v,w \in V\label{eq:graph:sdpMkBGPExt:A}\\
	{}& {}&{}&\sum_{v\in S}\frac{1}{2}\|\bm{u}-\bm{v}\|_2^2\;\geq\;|S|-\frac{n}{k} &{}& \forall S\subseteq V,u\in{S}\label{eq:graph:sdpMkBGPExt:B}\\
	{}& {}&{}& \|\bm{x}_{H}\|^2_2\geq \frac{1}{2}\|{\bm{a}-\bm{b}}\|^2_2&{}& \forall H\in{G\langle M\rangle},(a,b)\in H\label{eq:graph:sdpMkBGPExt:C}
\end{alignat}

Here, the first two constraints used in \eqref{eq:graph:sdpMkBGPExt:goal}  are similarly defined as previous.
The second constraints shown in \eqref{eq:graph:sdpMkBGPExt:C} is an extension of the corresponding constraints used by the {\lxmProblemMotifTriangle} problem.
If $M$ is a motif of size $c$, then for each $H\in{G\langle M\rangle}$, the SDP will includes $c$ conditions.

Obviously, when the size of $M$ is $c$, compared with \eqref{eq:graph:sdpMkBGP:goal},
the SDP shown in \eqref{eq:graph:sdpMkBGPExt:goal} needs $|V|^c$ constraints in addition.
According to Theorem~\ref{theorem:graph:sdpSolution}, it can be still verified that the SDP shown in 
\eqref{eq:graph:sdpMkBGPExt:goal} can be solved within polynomial time.

Finally, the proof of Theorem~\ref{theorem:graph:MkBGPLowerBound} and \ref{theorem:graph:MkBGP:ratio}
can be extended to the general case also, and we have the following result.

\begin{theorem}\label{theorem:graph:MkBGP:generalratio}
Given a $c$-motif $M$, there is a bi-criteria $(k,2)$ approximation algorithm with polynomial time cost for {\lxmProblemMotif},
where the expected approximation ratio can be bounded by $O(\sqrt{\log k\log n})$ and the size of each output partition
is at most $\frac{2n}{k}$.
\qed
\end{theorem}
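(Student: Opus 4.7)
The plan is to mimic the argument for the triangle case (Theorem~\ref{theorem:graph:MkBGP:ratio}) almost verbatim, since the SDP~\eqref{eq:graph:sdpMkBGPExt:goal} and its associated rounding algorithm are direct generalizations of those used for $\Delta$-$k$BGP. First I would establish the analogue of Theorem~\ref{theorem:graph:MkBGPLowerBound}: given the optimal $k$-balanced partition $\textsc{PAR}^*_G=\{V_1,\dots,V_k\}$, assign each node $v\in V_j$ the unit vector $\bm{v}=\bm{e}_j$, and for every occurrence $H\in G\langle M\rangle$ set $\|\bm{x}_H\|_2^2=1$ if some edge of $H$ is cut and $0$ otherwise. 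The triangle inequality and spreading constraints were already verified in the proof of Theorem~\ref{theorem:graph:WkBGPLowerBound}; the new constraints~\eqref{eq:graph:sdpMkBGPExt:C} hold because $\tfrac{1}{2}\|\bm{a}-\bm{b}\|_2^2\in\{0,1\}$ and equals $1$ only when $H$ is cut, in which case $\|\bm{x}_H\|_2^2=1$. Hence the SDP optimum $S^*$ is a lower bound on $\textsf{cost}_M(\textsc{PAR}^*_G)$.

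Next I would run the natural extension of \textsc{Tri-kBGPartition}: enumerate $G\langle M\rangle$ (polynomial in $n$ since $c$ is constant), solve~\eqref{eq:graph:sdpMkBGPExt:goal} in polynomial time (the added constraints only multiply the SDP size by $O(n^c)$, still polynomial by the analogue of Theorem~\ref{theorem:graph:sdpSolution}), and apply the same $g(\cdot)$-embedding plus random hyperplane rounding of~\cite{Krauthgamer09PartitioningGraphs} followed by the merging post-process. The balance guarantee of $\tfrac{2n}{k}$ and polynomial expected running time then carry over unchanged from the triangle case.

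The core step is the generalization of Theorem~\ref{theorem:graph:roundingboundA}. For an occurrence $H$ with edge set $E(H)$ (of constant size $\binom{c}{2}$), let $\mathcal{A}_H$ be the event that $H$ is destroyed. A union bound over the edges of $H$ combined with Lemma~\ref{lemma:graph:preRatio} yields
\begin{align*}
\Pr[\mathcal{A}_H]
\;\le\; \sum_{(a,b)\in E(H)}\Pr[\mathcal{A}_{a,b}]
\;\le\; 3AB\sqrt{2\log(2Ck)\log n}\sum_{(a,b)\in E(H)}\|\bm{a}-\bm{b}\|_2^2.
\end{align*}
The constraints~\eqref{eq:graph:sdpMkBGPExt:C} then give $\tfrac{1}{2}\|\bm{a}-\bm{b}\|_2^2\le \|\bm{x}_H\|_2^2$ for every $(a,b)\in E(H)$, so the sum above is at most $2\binom{c}{2}\|\bm{x}_H\|_2^2$. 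Summing $\Pr[\mathcal{A}_H]$ over all $H\in G\langle M\rangle$ and using the definition of the SDP objective gives
\begin{align*}
\mathbf{E}[\textsf{cost}_M(\textsc{PAR}_G)]
\;\le\; 6\binom{c}{2}AB\sqrt{2\log(2Ck)\log n}\cdot S^*
\;=\; O\!\bigl(\sqrt{\log k\log n}\bigr)\cdot S^*,
\end{align*}
where the hidden constant depends only on $c$. Combined with $S^*\le \textsf{cost}_M(\textsc{PAR}^*_G)$, this yields the claimed expected approximation ratio.

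The only real obstacle beyond the triangle proof is ensuring that the constant $\binom{c}{2}$ introduced by the union bound is absorbed in the $O(\cdot)$ (it is, since $c$ is treated as a constant parameter of the problem), and that the SDP remains polynomially sized when enumerating $G\langle M\rangle$; both are immediate for constant $c$. Everything else reuses the lower bound argument, the $g_{\Lambda}$-embedding, and the hyperplane rounding analysis already carried out for \textsc{Tri-kBGPartition}.
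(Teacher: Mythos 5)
Your proposal is correct and follows essentially the same route the paper sketches for this theorem: set up the extended SDP \eqref{eq:graph:sdpMkBGPExt:goal}, reuse the lower-bound construction of Theorem~\ref{theorem:graph:MkBGPLowerBound}, and repeat the rounding analysis of Theorem~\ref{theorem:graph:MkBGP:ratio} using the per-edge separation bound of Lemma~\ref{lemma:graph:preRatio}. The only cosmetic difference is that you bound $\Pr[\mathcal{A}_H]$ by a plain union bound over all (at most $\binom{c}{2}$) edges of $H$ instead of the slightly sharper two-edge decomposition used for triangles, which only changes the $c$-dependent constant hidden in $O(\sqrt{\log k\log n})$.
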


\section{Conclusion}\label{sec:conclusion}
In this paper, two typical problems of graph partitioning
motivated by the big data computing applications are studied.
For the {\lxmProblemWordload} problem, a bi-criteria polynomial time algorithm
with  $O(\sqrt{\log n\log k})$ approximation ratio is designed.
For the {\lxmProblemMotif} problem, it is proved to be $\lxmNP$-complete and impossible to
be approximated within finite ratio, then a bi-criteria polynomial time algorithm
with  $O(\sqrt{\log n\log k})$ approximation ratio is designed.

\bibliographystyle{plain}
\bibliography{ref}

\end{document}